\documentclass[runningheads]{llncs}

\newif\ifsubmit
\newif\ifextended

\submittrue
\extendedtrue

\usepackage[T1]{fontenc}
\usepackage{multirow}
\usepackage{graphicx}
\usepackage[pdfpagelabels=true]{hyperref}

\usepackage{xcolor}

\usepackage{amsfonts, amsmath, amssymb}
\usepackage{algorithm}
\usepackage{xspace}
\usepackage{algpseudocode}
\usepackage{url}
\usepackage{paralist}
\usepackage{ifthen}
\usepackage{tikz}
\usepackage{placeins}
\usetikzlibrary{quotes,arrows.meta}
\usetikzlibrary{quotes,arrows.meta}
\usetikzlibrary{shapes}
\usepackage{booktabs}
\usepackage{subcaption}
\usepackage{xfrac}

\newcommand{\eg}{e.g.,\@\xspace}

\newcommand{\colorpar}[3]{\colorbox{#1}{\parbox{#2}{#3}}}
\newcommand{\marginremark}[3]{\marginpar{\colorpar{#2}{3.5em}{\color{#1}#3}}}
\newcommand{\commentside}[2]{\marginpar{\color{#1}\tiny#2}}

\ifsubmit
  \newcommand{\TODO}[1]{}
  \newcommand{\REMARK}[1]{}
  \newcommand{\kvg}[1]{}
  \newcommand{\mv}[1]{}
  \newcommand{\tw}[1]{}
\else
  \newcommand{\TODO}[1]{\commentside{teal}{\textsc{Todo:} #1}}
  \newcommand{\REMARK}[1]{\commentside{teal}{\textsc{Remark:} #1}}
  \newcommand{\kvg}[1]{\marginremark{purple}{white}{\tiny{[KVG]~ #1}}}
  \newcommand{\mv}[1]{\marginremark{black}{yellow}{\tiny{[MV]~ #1}}}
  \newcommand{\tw}[1]{\marginremark{blue}{white}{\tiny{[TW]~ #1}}}
\fi

\newcommand{\Region}{\ensuremath{\mathcal{R}}}

\newcommand{\Rpos}{\ensuremath{\mathbb{R}_{> 0}}}
\newcommand{\PS}{\ensuremath{\textit{PS}}}
\newcommand{\MS}{\ensuremath{\textit{MS}}}
\newcommand{\MA}{\ensuremath{\mathcal{M}}}
\newcommand{\pMA}{\ensuremath{\mathcal{M}^{\mathcal{X}}}}
\newcommand{\dMA}{\ensuremath{\mathcal{M}^\delta}}
\newcommand{\dpMA}{\ensuremath{{\mathcal{M}^{\mathcal{X}}}^\delta}}

\newcommand{\MAval}[1]{\ensuremath{\mathcal{M}[#1]}}
\newcommand{\dMAlow}{\ensuremath{\underline{\mathcal{M}^\delta}}}
\newcommand{\dMAhigh}{\ensuremath{\overline{\mathcal{M}^\delta}}}
\newcommand{\Pmax}{\ensuremath{\textit{Pmax}}}
\newcommand{\Pmin}{\ensuremath{\textit{Pmin}}}

\newcommand{\tool}[1]{\textsc{#1}\xspace}
\newcommand{\storm}{\tool{Storm}}
\newcommand{\modest}{\tool{Modest}}
\newcommand{\stormpy}{\tool{Stormpy}}
\newcommand{\iscasmc}{\tool{iscasMC}}
\newcommand{\imca}{\tool{IMCA}}
\newcommand{\prismpsy}{\tool{PRISM-PSY}}
\newcommand{\prism}{\tool{PRISM}}
\newcommand{\qcomp}{\tool{QCOMP}}

\newcommand\smallpar[1]{%
\medskip\noindent\emph{#1}}

\newcommand{\para}[1]{\ifthenelse{\boolean{extended}}{\paragraph{#1}}{\smallpar{#1}}}

\def\orcidID#1{\smash{\href{http://orcid.org/#1}{\protect\raisebox{-1.25pt}{\protect\includegraphics{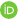}}}}}
\makeatother

\RequirePackage{graphicx}
\usepackage[firstpageonly=true,angle=0,vpos=.147\paperheight,hpos=.39\linewidth,vanchor=t,hanchor=l]{draftwatermark}
\SetWatermarkText{%
\raisebox{-3.5cm}
{\begin{minipage}{\linewidth}\centering\includegraphics[width=12.5mm] 
{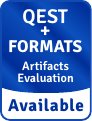}\hfill\includegraphics[width=12.5mm] 
{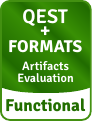}\end{minipage}}%
}

\begin{document}
\title{Verification of Parametric Markov Automata under Time-bounded Reachability}
\titlerunning{Verification of parametric Markov Automata}
%
\author{Kevin van de Glind\orcidID{0009-0004-0205-2346}%
\thanks{This work has been partially supported by NWO VENI Grant UNICORN (232.211).}
\and
Matthias Volk\orcidID{0000-0002-3810-4185}
\and
Tim~A.C.~Willemse\orcidID{0000-0003-3049-7962}}
%
%
\institute{Eindhoven University of Technology, Eindhoven, The Netherlands, \email{k.v.d.glind@tue.nl}}
\maketitle              

\begin{abstract}
Analysis of Markov models is of high importance for formal verification. Until now, analysis of Markov Automata required them to be fully specified, which is a considerable restriction as rates may be unknown or influenced by uncertainty of the environment. We introduce parametric Markov Automata (pMA) to capture this uncertainty with parametric transition functions. On these parametrized models, two different synthesis problems for time-bounded reachability properties are considered: I) Does there exist a valuation in the parameter space such that the instantiated model satisfies/violates the property, and II) given the parameter space, how can it be partitioned into satisfying and violating regions?

Our approach comprises two steps: I) The pMA is discretized to a parametric Markov decision process (pMDP), and II) through analysis of the pMDP, bounds are obtained for the reachability probability in the parametric MA. This approach solves the above problems up to a specified precision, as the accumulated error terms can be made arbitrarily small.

We implemented the approach using the \storm model checker.
Our experimental evaluation shows that the main performance bottlenecks originate from the discretization of the pMA.

\end{abstract}

\section{Introduction}

\emph{Markov Automata (MA)}~\cite{DBLP:conf/lics/EisentrautHZ10} are a compositional model for nondeterministic, continuous stochastically timed systems. They are able, as motivated in~\cite{Butkova_Fox_2019}, to serve as semantics for other reliability models such as Generalized Stochastic Petri Nets~\cite{ajmone1984class} and Dynamic Fault Trees~\cite{dugan2002dynamic}. Analysis of MA is supported by various tools, such as \storm~\cite{DBLP:journals/sttt/HenselJKQV22}, \modest~\cite{DBLP:conf/tacas/HartmannsH14} and \imca~\cite{DBLP:conf/nfm/GuckHKN12}. 

Until now, all algorithms for model checking MA required the model to be completely specified. In particular, each probabilistic and rate transition had to be precisely known prior to analysis. This can be an unrealistic assumption for real-world systems, as rates or probabilities may be subject to environmental changes or not precisely known.
In order to reason about such uncertainty, we introduce \emph{parametric Markov Automata~(pMA)}, where transitions can be parametrized.
This parametric extension of MA follows similar extensions of both discrete-time models~\cite{DBLP:journals/fmsd/JungesAHJKQV24} and continuous-time Markov chains~\cite{DBLP:conf/rtss/HanKM08,DBLP:journals/acta/CeskaDPKB17}.

Given a pMA and a region of parameter valuations, we solve two problems for time-bounded reachability properties. Firstly, \emph{satisfaction verification} checks whether the valuations in the region satisfy the property. Secondly, \emph{synthesis}~\cite{DBLP:journals/fmsd/JungesAHJKQV24} partitions the region into satisfying and violating subregions.
We solve both problems by a two-step approach. First, we discretize the pMA and obtain a \emph{parametric Markov decision process (pMDP)}. Second, after some transformations on the pMDP to make it suitable for further analysis, we perform \emph{parameter lifting (PL)}~\cite{DBLP:conf/atva/QuatmannD0JK16} on this discretized model. PL encodes the choice of parameter valuation as non-deterministic choices, yielding an MDP that allows analysis via off-the-shelf methods.
The proposed approach yields upper and lower bounds for the true reachability probability on the pMA, governed by three different error terms.
These bounds are then used for a classical threshold synthesis scheme~\cite{DBLP:journals/fmsd/JungesAHJKQV24,DBLP:journals/acta/CeskaDPKB17}.

\para{Related work.}
Similar models to MA include \emph{Continuous-Time MDPs~(CTMDPs)}~\cite{Miller_1968} and \emph{Interactive Markov Chains~(IMC)}~\cite{hermanns2002interactive}. The former allows for states with a combination of nondeterminism and timing, the latter already assumes a clear separation between these two types of states. In the literature, these became known as probabilistic and Markovian states. Markov Automata generalize both and introduced the possibility of immediate choice with a successor distribution in probabilistic states~\cite{DBLP:journals/eceasst/HatefiH12}. Several approaches have been proposed to analyze time-bounded reachability properties on MA: some operate directly on the continuous model~\cite{Butkova_Fox_2019,ashok2018continuous}, while others first discretize the model and then carry out the analysis using step-bounded queries~\cite{DBLP:journals/corr/GuckHHKT14,DBLP:journals/eceasst/HatefiH12,neuhausser2010model}.

Parametric extensions are natural for Markov models.
Model checking algorithms exist for discrete-time models~\cite{10.1007/978-3-540-31862-0_21,hahn2011probabilistic,DBLP:journals/fmsd/JungesAHJKQV24} supported through \eg \storm~\cite{DBLP:journals/sttt/HenselJKQV22} and \iscasmc~\cite{10.1007/978-3-319-06410-9_22}, probabilistic timed automata~\cite{DBLP:conf/qest/HartmannsKKS21} and continuous-time Markov chains~\cite{DBLP:conf/rtss/HanKM08,DBLP:journals/acta/CeskaDPKB17} through \prismpsy~\cite{DBLP:conf/tacas/CeskaPPBK16}.
A subset of parametric models are interval models, where each transition is assigned an interval of possible values instead. 
These are supported by \eg \prism~\cite{KNP11} or \storm~\cite{DBLP:journals/sttt/HenselJKQV22}.

\para{Contributions.}
This paper introduces parametric Markov automata. 
Our main contribution is an algorithm for time-bounded reachability on pMA. 
To perform analysis, we generalize the discretization result of~\cite{DBLP:journals/corr/GuckHHKT14} to pMA.
Furthermore, under common mild assumptions---non-zenoness of the pMA, multi-affine parametric functions and a weak global condition
---we transform the discretized pMA to an MDP, and solve both satisfaction verification and synthesis up to arbitrary precision.
We implemented our approach using \storm and demonstrate it on a variety of pMA models.

\para{Outline.}
Section~\ref{section:prelims} provides the necessary background on Markov Automata and discretization results on MA.
We introduce pMA in Section~\ref{section:parametricMA} and formally define the satisfaction verification and synthesis problems.
Section~\ref{section:results} covers the necessary transformations from pMA to a discretized model over which parameter lifting can be performed. Using PL, we solve the satisfaction verification and synthesis problems in Section~\ref{section:ParSynthStrategies}.
We experimentally evaluate our approach on a variety of benchmark models in Section~\ref{section:experiments} and conclude in Section~\ref{section:conclusion}.

\section{Preliminaries}
\label{section:prelims}
We briefly introduce Markov Automata and the discretization analysis technique~\cite{DBLP:journals/corr/GuckHHKT14} on which our approach is based.

\begin{definition}[Markov Automaton]\label{def:MarkovAutomaton}
    A \emph{Markov Automaton (MA)} is a tuple $\MA = (S, A, s_0, \rightarrow, \Rightarrow)$, where:
    \begin{itemize}
	    \item $S$ is a finite set of \emph{states}. 
	    \item $A$ is a finite set of \emph{actions}. 
	    \item $s_0\in S$ is the \emph{initial state}.
	    \item $\rightarrow\subseteq S\times A\times D(S)$ is the \emph{probabilistic} transition relation, where $D(S)$ is the set of discrete distributions over $S$.
	    \item $\Rightarrow\subseteq S\times\Rpos\times S$ is the \emph{rate} transition relation. The \emph{rate} of a transition is given by $
            R(s,s')=\sum_{ (s,r,s')\in \Rightarrow}r $.
    \end{itemize}
\end{definition}
Under the \emph{maximal progress assumption}---a common assumption for verification of MA~\cite{10.1007/978-3-030-30281-8_4}---we assume that probabilistic transitions happen without delay and therefore take precedence over rate transitions which happen instantaneously with probability zero. We can therefore partition the state space into \emph{probabilistic} and \emph{Markovian} states $S=\PS\sqcup\MS$.
The two transition relations can then be restricted to probabilistic and Markovian states, respectively: $\rightarrow\subseteq \PS\times A\times D(S)$ and $\Rightarrow\subseteq \MS\times\Rpos\times S$.
Figure~\ref{subfig:MAexampleInstantiated} depicts an example MA.

For a Markovian state $s \in \MS$ the \emph{sojourn time} of the state is governed by an \emph{exponential distribution} with rate $\lambda(s) = \sum_{s'\in S}R(s,s')$. Thus, the probability of leaving a state $s$ within $t$ units is given by $1-e^{-\lambda(s)t}$. Afterwards, we transition to successor state $s' \in S$ with probability $p(s,s') = \frac{R(s,s')}{\lambda(s)}$.
A probabilistic state $s \in \PS$ is left immediately by choosing an available action $\alpha\in A$ with $(s,\alpha,\mu)\in \rightarrow$, and transitions to state $s'$ with probability $p(s, s') = \mu(s')$.

An MA can therefore alternatively be defined as a tuple $\MA = (S,A,s_0,\lambda,p)$, where we set $\lambda(s)=+\infty$ for all probabilistic states $s\in\PS$, similar to~\cite{DBLP:journals/corr/GuckHHKT14}.

A \emph{scheduler} $\sigma$ resolves non-determinism on an MA $\MA$ by choosing available actions in probabilistic states, resulting in a deterministic model $\MA_\sigma$. Schedulers $\sigma:\PS\times \mathbb{R}_{\geq 0} \rightarrow A$ base this choice on the current state and the total time spent in the system. These belong to the class of total-time positional deterministic schedulers, denoted by $TP(\MA)$. Under a scheduler, one can define a probability measure on the MA; we refer to~\cite{DBLP:conf/fossacs/NeuhausserSK09} for the details.

MA are a superclass of \emph{Markov decision processes (MDPs)} and \emph{continuous-time Markov chains (CTMCs)}.
An MDP is a tuple $(S,A,s_0,p)$ and can be seen as an MA where $\MS=\emptyset$. Similarly, a CTMC can be viewed as an MA where $\PS=\emptyset$. For a more detailed description of both models, we refer to~\cite{DBLP:conf/lics/Katoen16}.

\para{Time-bounded reachability.}
In this paper, we focus the analysis on \emph{time-bounded reachability properties} of the form $\mathbb{P}[\diamond^{\leq t}G]\sim\theta$, computing: ``Is the maximal/mini\-mal probability to reach a set of goal states $G \subseteq S$ within $t$ time units greater/less than a given threshold $\theta$?''. For a given MA $\MA$ and scheduler $\sigma$, the \emph{time-bounded reachability probability} is denoted by $Pr^{\MA_\sigma}[\diamond^{\leq t} G]$. Comparing this value with a user-specified threshold $\theta\in~[0,1]$ yields a time-bounded reachability property. This can be formulated as a CSL model checking problem~\cite{DBLP:conf/cav/AzizSSB96}:
\begin{align*}
    \MA_\sigma\models \mathbb{P}[\diamond^{\leq t}G]\sim\theta \quad \textit{iff} \quad Pr^{\MA_\sigma}[\diamond^{\leq t} G]\sim\theta, \text{ for} \sim\in\{<, \leq, \geq, >\},\theta\in [0,1].
\end{align*}
For time-bounded reachability probabilities, total-time positional deterministic schedulers $TP(\MA)$ suffice~\cite{neuhausser2010model} and we define the maximal/minimal probability:
\begin{align*}
    &\Pmax[\MA, \diamond^{\leq t} G] = \sup_{\sigma\in TP(\MA)}Pr^{\MA_\sigma}[\diamond^{\leq t} G]\\
    &\Pmin[\MA, \diamond^{\leq t} G] = \inf_{\sigma\in TP(\MA)}Pr^{\MA_\sigma}[\diamond^{\leq t} G]
\end{align*}
\noindent Similarly to before, one can use $Pm[\MA, \diamond^{\leq t} G]\leq \theta$ for $m\in\{min,max\}$ to define time-bounded reachability properties.

\para{Discretization.}
Time-bounded reachability properties can be analysed using the \emph{discretization method}~\cite{DBLP:journals/corr/GuckHHKT14,DBLP:conf/atva/ButkovaHHK15,neuhausser2010model}, also called digitization~\cite{DBLP:journals/eceasst/HatefiH12}.
This transformation allows one to reason about the approximate behaviour of the MA by analyzing the time-discretized model, yielding a (discrete-time) MDP.
\begin{definition}[Discretized MA]\label{def:MAdiscretization}
	Let MA $\MA=(S,A,s_0,\rightarrow, \Rightarrow)$ and discret\-ization step $\delta>0$. The \emph{discretized MA} is an MDP $\dMA = (S,A, s_0,P')$, where:
    \begin{align*}
        P'(s,s') = \begin{cases}
            (1-e^{-\lambda(s)\delta})p(s,s'), &\text{if } s\neq s'\\
            (1-e^{-\lambda(s)\delta})p(s,s) + e^{-\lambda(s)\delta}, &\text{if } s=s'
        \end{cases}
    \end{align*}
\end{definition}
An example of this transformation is shown in Figure~\ref{subfig:MAexampleInstantiated}-\ref{subfig:MAexampleDiscretized}.
The central idea relies on the fact that, for small $\delta$, the probability of taking more than one Markovian step within time $\delta$ is small.
The error made by this discretization can be bounded~\cite{DBLP:journals/corr/GuckHHKT14}.
\begin{theorem}\label{thm:GuckDiscretization}\cite[Theorem 5.3]{DBLP:journals/corr/GuckHHKT14}
	Let $\MA$ be an MA, $G\subseteq S, \lambda = \max_{s\in \MS}\lambda(s)$, $t>0$ and fix $\delta>0$ such that $t/\delta\in \mathbb{N}$.
    Then the error can be bounded as follows:
    \begin{align*}
        \Pmax[\dMA, \diamond^{\leq t/\delta}G]
        \leq \Pmax[\MA,\diamond^{\leq t}G] 
        \leq \Pmax[\dMA,\diamond^{\leq t/\delta}G] + \epsilon_\delta,
    \end{align*}
    where $\epsilon_\delta = 1-e^{-\lambda t}(1+\lambda\delta)^{t/\delta}$ denotes the \emph{discretization error}.
    The same bounds hold for $Pmin[\cdot]$.
\end{theorem}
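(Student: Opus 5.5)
The plan is to follow the strategy of \cite{DBLP:journals/corr/GuckHHKT14}. We compare, step by step, the behaviour of $\MA$ on time intervals of length $\delta$ with the behaviour of one step of $\dMA$. The central quantity is the probability that a Markovian state fires at most once within a $\delta$-interval. For a Markovian state with exit rate $\lambda(s)\le\lambda$, the probability of staying put is $e^{-\lambda(s)\delta}$. The probability of exactly one jump followed by no further Markovian jump is at least $\lambda(s)\delta e^{-\lambda(s)\delta}\cdot(\ldots)$, which is bounded below by the corresponding term for the uniform rate $\lambda$. Hence the probability of at most one Markovian jump in an interval of length $\delta$ is at least $e^{-\lambda\delta}(1+\lambda\delta)$. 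Since the $t/\delta$ intervals are independent, with probability at least $\bigl(e^{-\lambda\delta}(1+\lambda\delta)\bigr)^{t/\delta}=e^{-\lambda t}(1+\lambda\delta)^{t/\delta}$ no interval contains two or more Markovian jumps. This is exactly $1-\epsilon_\delta$.

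\textbf{Lower bound.} Given a step-dependent scheduler $\sigma'$ of $\dMA$, we build a total-time positional scheduler $\sigma$ of $\MA$ that, at total time $\tau\in[k\delta,(k+1)\delta)$, mimics $\sigma'$ at step $k$. The probabilistic states are resolved instantaneously in both models. The delicate point is that a step of $\dMA$ performs a Markovian jump (or stays) and then any number of probabilistic steps. I would formalise this via a backward recursion on $k$: define $V_k(s)$ as the step-bounded value of $\dMA$ with $t/\delta-k$ steps left. Then show $\Pmax[\MA$ from $s$ at time $k\delta]\ge V_k(s)$, using the fact that the real MA reaches the discretised successor within the interval with exactly the probability $P'(s,s')$ on the event of at most one jump. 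On extra jumps it can only help, since $G$ is absorbing w.l.o.g. and the value is monotone in remaining time. In fact, to be safe I would make goal states absorbing first, which does not change either value.

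\textbf{Upper bound.} We take an optimal (or $\varepsilon$-optimal) total-time positional scheduler $\sigma$ for $\MA$ and split the reachability event into two parts: on the event $E$ that every $\delta$-interval contains at most one Markovian jump, and on its complement. On $E$, the path of $\MA$ can be projected to a path of $\dMA$ of length $t/\delta$, with decisions taken at the start of each interval. Reaching $G$ before $t$ in $\MA$ then implies reaching $G$ within $t/\delta$ steps in the projected path. The probability of $E$ is governed by the discrete transition probabilities, restricted to at most one jump. Thus $Pr^{\MA_\sigma}[\diamond^{\le t}G\cap E]\le \Pmax[\dMA,\diamond^{\le t/\delta}G]$. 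The complement event contributes at most $1-Pr(E)\le\epsilon_\delta$. The $Pmin$ case is symmetric: the same inequalities hold pathwise for every scheduler, so we take infima instead of suprema.

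\textbf{Main obstacle.} I expect the main obstacle to be the projection step in the upper bound. A total-time scheduler in $\MA$ may choose differently at different times within one $\delta$-interval, whereas $\dMA$ fixes one choice per step. The approach I would try first is a backward induction over intervals. I would show that the value function of $\MA$ at time $k\delta$ is bounded by the one-step Bellman operator of $\dMA$ applied to the value at $(k+1)\delta$, plus the local error $1-e^{-\lambda\delta}(1+\lambda\delta)$. The within-interval freedom is harmless because the value at $(k+1)\delta$ is the same function whichever time inside the interval the choice is made, and probabilistic choices take zero time. The local errors then compose multiplicatively rather than additively, and this composition yields the sharp form $\epsilon_\delta$.
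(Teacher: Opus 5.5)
The paper gives no proof of this statement; it quotes it from Guck et al.\ (Theorem~5.3). Your outline is the standard digitisation argument from that line of work: a fixpoint characterisation of the MA value, monotonicity in the remaining time for the lower bound, and a per-interval bound $\eta:=1-e^{-\lambda\delta}(1+\lambda\delta)$ on the probability of two or more Markovian jumps for the upper bound. Your lower bound works as sketched, once you drop ``on the event of at most one jump'': the \emph{first} jump of the interval goes to $s'$ with probability exactly $P'(s,s')$, and monotonicity in the remaining time absorbs whatever follows.

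The gap is in the upper bound, at the step you flag. Let $V_k$ be the optimal MA value with remaining time $t-k\delta$, $W_k$ the optimal $\dMA$ value with $t/\delta-k$ steps left, and $B$ the one-step operator of $\dMA$, which is monotone with $B(v+c)=B(v)+c$. Your local inequality $V_k\le B(V_{k+1})+\eta$ then iterates only to $V_0\le W_0+(t/\delta)\eta$. In general $(t/\delta)\eta$ is strictly larger than $\epsilon_\delta=1-(1-\eta)^{t/\delta}$. Nothing in that inequality makes the errors compose multiplicatively, so as stated it does not yield the theorem.

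The missing idea is to keep the bad mass outside the Bellman step, so that the continuation error is paid only on the good event. Fix a Markovian $s$ and any scheduler $\sigma$. Let $A_{\ge2}$ be the event of at least two Markovian jumps in $[k\delta,(k+1)\delta]$ before $G$ is hit, and let $X_\tau$ be the state at time $\tau$, with $G$ made absorbing. Then
\[
  Pr^{\sigma}_{s}\bigl[\diamond^{\le t-k\delta}G\bigr]\le Pr(A_{\ge 2})+\mathbb{E}\bigl[\mathbf{1}_{A_{\ge 2}^{c}}\,V_{k+1}(X_{(k+1)\delta})\bigr].
\]
On $A_{\ge2}^{c}$ the sub-stochastic kernel to $X_{(k+1)\delta}$ is dominated by $P'$ followed by a best response in the probabilistic states. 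Two facts give this: the no-second-jump factor satisfies $e^{-\lambda(u)(\delta-\tau)}\le 1$, and any choice made at jump time $\tau$ is bounded by the maximising one. So the hypothesis $V_{k+1}\le W_{k+1}+\varepsilon_{k+1}$ gives
\[
V_k\le W_k+\varepsilon_{k+1}+Pr(A_{\ge2})(1-\varepsilon_{k+1})\le W_k+1-(1-\eta)(1-\varepsilon_{k+1}),
\]
hence $1-\varepsilon_0=(1-\eta)^{t/\delta}$.

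Equivalently, your global split already gives $\epsilon_\delta$, provided you prove $Pr^{\MA_\sigma}[\diamond^{\le t}G\cap E]\le\Pmax[\dMA,\diamond^{\le t/\delta}G]$ by the same domination. Since choices are made at jump times, not at interval starts, realise the mixture over jump times by a randomised history-dependent scheduler of $\dMA$; such a scheduler cannot exceed the deterministic step-bounded maximum.

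Two smaller repairs in your first paragraph. First, the one-jump term alone is not bounded below by its uniform-rate counterpart (let $\lambda(s)\to 0$). Only its sum with the no-jump term is: the first jump time is stochastically larger when $\lambda(s)\le\lambda$, and the probability of no second jump is nondecreasing in the jump time. Second, the intervals are not independent. The product bound comes from conditioning on the state at each $k\delta$, using that the per-interval bound holds uniformly in state and scheduler.
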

Step-bounded reachability analysis on the discretized MA requires an adapted value iteration scheme, as probabilistic states are left immediately~\cite{DBLP:journals/eceasst/HatefiH12,DBLP:journals/corr/GuckHHKT14}.

\begin{figure}[!hbt]
    \begin{subfigure}[!h]{0.30\textwidth}
    \centering
        \begin{tikzpicture}[->]
\node[ellipse,draw] (s0) at (1.1,2) {$s_0$};
\node[ellipse,draw] (s1) at (2.8,3) {$s_1$};
\node[ellipse,draw] (s2) at (2.8,1) {$s_2$};
\node[ellipse,draw] (sG) at (4.0,1) {$s_3$};
\node[ellipse,draw] (sE) at (4.0,3) {$s_4$};

 \path[draw=olive,line width=1pt] (s0) edge             node[above] {\tiny$a$} (1.7,1.5);
 \fill (1.7,1.5) circle (1.5pt);
 \path (1.7,1.5) edge             node[right, very near start, xshift=-2pt] {\tiny$0.7$} (s1);
 \path (1.7,1.5) edge            node[below, near start] {\tiny$0.3$} (s2);
 \path[draw=olive,line width=1pt] (s0) edge             node[above, near start] {\tiny$b$} (1.7,2.5);
 \fill (1.7,2.5) circle (1.5pt);
\path (1.7,2.5) edge              node[above, right, very near start] {\tiny$p$} (s2);
\path (1.7,2.5) edge              node[above, near start] {\tiny$1{-}p$} (s1);

\path[draw=teal,line width=1pt] (s1) edge              node[above, near start] {\tiny$1$} (sE);
\path[draw=teal,line width=1pt] (s1) edge              node[above, near start] {\tiny$1$} (sG);
\path[draw=teal,line width=1pt] (s2) edge              node[below, near start] {\tiny$4$} (sE);
\path[draw=teal,line width=1pt] (s2) edge              node[below, near start] {\tiny$q$} (sG);
\path[draw=teal,line width=1pt] (sE) edge [loop above] node {\tiny $1$} (sE);
\path[draw=teal,line width=1pt] (sG) edge [loop above] node {\tiny $1$} (sG);

\end{tikzpicture}
        \caption{Parametric MA}
        \label{subfig:MAexampleParametric}
    \end{subfigure}
    \begin{subfigure}[!h]{0.30\textwidth}
    \centering
        \begin{tikzpicture}[->]
\node[ellipse,draw] (s0) at (1.1,2) {$s_0$};
\node[ellipse,draw] (s1) at (2.8,3) {$s_1$};
\node[ellipse,draw] (s2) at (2.8,1) {$s_2$};
\node[ellipse,draw] (sG) at (4.0,1) {$s_3$};
\node[ellipse,draw] (sE) at (4.0,3) {$s_4$};

 \path (s0) edge             node[above] {\tiny$a$} (1.7,1.5);
 \fill (1.7,1.5) circle (1.5pt);
 \path (1.7,1.5) edge             node[right, very near start, xshift=-2pt] {\tiny$0.7$} (s1);
 \path (1.7,1.5) edge            node[below, near start] {\tiny$0.3$} (s2);
 \path (s0) edge             node[above, near start] {\tiny$b$} (1.7,2.5);
 \fill (1.7,2.5) circle (1.5pt);
\path (1.7,2.5) edge              node[above, right, very near start] {\tiny$0.6$} (s2);
\path (1.7,2.5) edge              node[above, near start] {\tiny$0.4$} (s1);

\path (s1) edge              node[above, near start] {\tiny$1$} (sE);
\path (s1) edge              node[above, near start] {\tiny$1$} (sG);
\path (s2) edge              node[below, near start] {\tiny$4$} (sE);
\path (s2) edge              node[below, near start] {\tiny$7$} (sG);
\path (sE) edge [loop above] node {\tiny $1$} (sE);
\path (sG) edge [loop above] node {\tiny $1$} (sG);

\end{tikzpicture}
        \caption{Instantiated MA}
        \label{subfig:MAexampleInstantiated}
    \end{subfigure}
    \begin{subfigure}[!h]{0.38\textwidth}
    \centering
        \begin{tikzpicture}[->]
\node[ellipse,draw] (s0) at (1.1,2) {$s_0$};
\node[ellipse,draw] (s1) at (2.8,3) {$s_1$};
\node[ellipse,draw] (s2) at (2.8,1) {$s_2$};
\node[ellipse,draw] (sG) at (5.2,1) {$s_3$};
\node[ellipse,draw] (sE) at (5.2,3) {$s_4$};

 \path (s0) edge             node[above] {\tiny$a$} (1.7,1.5);
 \fill (1.7,1.5) circle (1.5pt);
 \path (1.7,1.5) edge             node[right, very near start, xshift=-2pt] {\tiny$0.7$} (s1);
 \path (1.7,1.5) edge            node[below, near start] {\tiny$0.3$} (s2);
 \path (s0) edge             node[above, near start] {\tiny$b$} (1.7,2.5);
 \fill (1.7,2.5) circle (1.5pt);
\path (1.7,2.5) edge              node[above, right, very near start] {\tiny$0.6$} (s2);
\path (1.7,2.5) edge              node[above, near start] {\tiny$0.4$} (s1);

\path (s1) edge [loop above] node [above, yshift=-2pt] {\tiny $e^{-2\delta}$} (s1);
\path (s2) edge [loop above] node [above, yshift=-2pt] {\tiny $e^{-11\delta}$} (s2);
\path (s1) edge              node[above] {\tiny$\frac{1-e^{-2\delta}}{2}$} (sE);
\path (s1) edge              node[above, right, xshift=-23pt, yshift=18pt] {\tiny$\frac{1-e^{-2\delta}}{2}$} (sG);
\path (s2) edge              node[below, right, xshift=-25pt, yshift=-19pt] {\tiny$\frac{4(1-e^{-11\delta})}{11}$} (sE);
\path (s2) edge              node[below, yshift=2pt] {\tiny$\frac{7(1-e^{-11\delta})}{11}$} (sG);
\path (sE) edge [loop above] node {\tiny $1$} (sE);
\path (sG) edge [loop above] node {\tiny $1$} (sG);

\end{tikzpicture}
        \caption{Discretized MA}
        \label{subfig:MAexampleDiscretized}
    \end{subfigure}
    \caption{(a) Parametric MA with nondeterministic transitions in {\color{olive}olive} and rate transitions in {\color{teal}teal}, (b) instantiated MA for instantiation $p \mapsto 0.6, q \mapsto 7$, and (c) its discretization using discretization step $\delta$.}
    \label{fig:MAdiscretization}
\end{figure}

\section{Parametric Markov Automata}\label{section:parametricMA}
In this section, we define \emph{parametric MA} which extend MA with parametric rate- and probabilistic functions.
We lift the discretization result of Theorem~\ref{thm:GuckDiscretization} to this parametric model in Lemma~\ref{lemma:parametricMADiscretization}. Last, we provide the formal problem definition, defining the satisfaction verification and synthesis problems.

\para{Parametric Markov automata.}
We first introduce the necessary notation for parameters.
Let $\mathcal{X} = \{x_1,...,x_N\}$ be a finite set of \emph{parameters}. A \emph{(parameter) instantiation} is a mapping $\mathcal{X}\rightarrow \mathbb{R}$. A parametrized successor distribution is denoted by $D^\mathcal{X}: S\rightarrow [0,1]$. A \emph{parameter region} $\Region$ is a closed and bounded subset, in the topological sense as described in for example~\cite{garling2014course}, $\Region\subset\mathbb{R}_{\geq 0}^{|\mathcal{X}|}$ with
any $v\in \Region$ called a \emph{valuation}. Moreover, if $\Region$ is a polytope, we write $V_\Region$ to denote its vertices. A polynomial over $\mathcal{X}$ is a weighted sum over monomials $x_1^{a_1} ... x_n^{a_n}$ for $a_1,...,a_n\in\mathbb{N}$ and it is called \emph{multi-affine} if for each monomial $a_1,...,a_n\leq 1$.
The set of polynomials over $\mathcal{X}$ with real coefficients is denoted by $\mathbb{R}[\mathcal{X}]$. A polynomial $f\in \mathbb{R}[\mathcal{X}]$ is \emph{monotone increasing/decreasing} on $\Region$ with respect to $x\in \mathcal{X}$ if, respectively, $\frac{d}{dx}f|_\Region\geq 0 $ or $\frac{d}{dx}f|_\Region\leq 0$.
A polynomial $f$ is called \emph{monotone} on $\Region$ if it is monotone with respect to each parameter. Multi-affine functions are monotone on any region $\Region$. 

\begin{definition}[Parametric Markov Automaton]\label{def:pMA}
    A \emph{parametric Markov Automaton (pMA)} over $\mathcal{X}$ is a tuple $\pMA=(S, A,s_0,\rightarrow,\Rightarrow)$, where states $S$, actions $A$ and initial state $s_0$ are as in Definition~\ref{def:MarkovAutomaton}, and the two transition relations $\rightarrow \subseteq PS\times A \times D^{\mathcal{X}}(S)$ and $\Rightarrow \subseteq MS\times \mathbb{R}[\mathcal{X}] \times S$ are parametrized over $\mathcal{X}$. 
\end{definition}

Similarly to pMDPs and pCTMCs~\cite{DBLP:conf/rtss/HanKM08,DBLP:journals/fmsd/JungesAHJKQV24}, pMAs can be instantiated. An example of a pMA and an instantiation can be seen in Figure~\ref{subfig:MAexampleParametric}-\ref{subfig:MAexampleInstantiated}.

\begin{definition}[Instantiated MA]\label{def:pMAinstantiation}
        For any pMA $\pMA$ and valuation $v\in\mathbb{R}^{|\mathcal{X}|}$, we obtain the \emph{instantiated MA} $\pMA[v]$ by applying the instantiation $x_i\mapsto v_i$.
\end{definition}
We use the same notation for an instantiation of a single rate or successor distribution with valuation $v$ as for instantiating a pMA.
A valuation $v$ is called \emph{well-defined} if the resulting instantiated MA $\pMA[v]$ satisfies Definition~\ref{def:MarkovAutomaton}.
A well-defined valuation is \emph{graph-preserving} if the following conditions are met:
\begin{alignat*}{2}
    \forall &s\in \MS,s'\in S,v\in\Region: &&\quad R(s,s') \not\equiv 0 \rightarrow R[v](s,s')\neq 0\\
    \forall &s\in \PS,s'\in S, \alpha\in A,v\in\Region: &&\quad P(s,\alpha,s') \not\equiv 0 \rightarrow P[v](s,\alpha,s')\neq 0
\end{alignat*}
In the remainder we assume each valuation is well-defined and graph-preserving.

A pMA is called \emph{non-zeno} if, for every valuation, the instantiated MA does not have a \emph{zeno cycle}, which is a cycle of probabilistic states one does not exit almost surely. For further details on zenoness of MA, we refer to~\cite{DBLP:journals/eceasst/HatefiH12,DBLP:journals/corr/GuckHHKT14}.
We assume each pMA to be non-zeno, a common assumption on MA in the literature~\cite{DBLP:conf/tacas/ButkovaWH17,DBLP:journals/eceasst/HatefiH12}. Moreover, in practice, non-zenoness, and the stronger occurrence of having any cycles of probabilistic states at all, is extremely rare~\cite{DBLP:conf/tacas/ButkovaWH17} and existing MA models, found for example in the \qcomp benchmark suite~\cite{DBLP:conf/tacas/HartmannsKPQR19}, do not have, to our knowledge, cycles of probabilistic states.

In order to determine the exact extremal values of parametric functions, these are restricted to multi-affine polynomials. This is in line with the restriction on parametric rate functions in~\cite{DBLP:journals/acta/CeskaDPKB17}. For the sake of presentation, we restrict attention to hyper-rectangular parameter spaces, a standard assumption in literature~\cite{DBLP:journals/acta/CeskaDPKB17,DBLP:journals/fmsd/JungesAHJKQV24}. However, our methods extend to finite unions of hyper-rectangles. 

\para{Lifting discretization.}
In order to solve time-bounded reachability queries on pMA, we generalize the discretization result of Theorem~\ref{thm:GuckDiscretization}.
The discretization of pMA is analogous to Definition~\ref{def:MAdiscretization}, but instead yields a parametric MDP. The generalization is shown in Lemma~\ref{lemma:parametricMADiscretization}, where, for any two parametric models $\MA, \MA'$ over the same parameters, we write $Pmax[\MA, \phi]\leq Pmax[\MA', \phi]$ if for each valuation $v\in \Region $ we have $Pmax[\MAval{v}, \phi]\leq Pmax[\MA'[v], \phi]$. 

\begin{lemma}\label{lemma:parametricMADiscretization}
    Let $\pMA$ be a pMA over a region $\Region$ and $t>0$. Fix $\delta > 0$ as discretization step such that $t/\delta\in\mathbb{N}$.
    We get the following:
    \begin{align*}
        Pmax[\dpMA, \diamond^{\leq t/\delta}G] \leq Pmax[\pMA, \diamond^{\leq t}G] \leq& Pmax[\dpMA, \diamond^{\leq t/\delta}G] \\&+ 1- \inf_{v\in \Region}[e^{-\lambda[v]t}(1+\lambda[v]\delta)^{t/\delta}] 
    \end{align*}
    The same results hold for $Pmin[\cdot]$.
\end{lemma}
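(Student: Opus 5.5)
The plan is to reduce the statement pointwise to Theorem~\ref{thm:GuckDiscretization}. By the convention introduced just before the lemma, an inequality between parametric models means that it holds for every valuation $v\in\Region$. So we fix an arbitrary $v\in\Region$. By the standing assumptions, $v$ is well-defined and graph-preserving and $\pMA[v]$ is non-zeno, so $\pMA[v]$ is an ordinary MA to which Theorem~\ref{thm:GuckDiscretization} applies.

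The first step is to check that discretization commutes with instantiation, that is, $(\dpMA)[v]=(\pMA[v])^\delta$. The parametric discretization is defined symbolically by the same formula as Definition~\ref{def:MAdiscretization}, with $\lambda(s)$ and $p(s,s')$ being parametric expressions. Substituting $v$ into $(1-e^{-\lambda(s)\delta})p(s,s')$, and into the self-loop term, gives the same numbers as first instantiating the rates and probabilities and then discretizing. Probabilistic states are left unchanged in both constructions. Hence the step-bounded quantities $Pmax[\dpMA[v],\diamond^{\leq t/\delta}G]$ and $Pmax[(\pMA[v])^\delta,\diamond^{\leq t/\delta}G]$ coincide, and likewise for $Pmin$.

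Next, apply Theorem~\ref{thm:GuckDiscretization} to $\pMA[v]$ with $\lambda[v]=\max_{s\in\MS}\lambda[v](s)$. This gives the lower bound directly. It also gives the upper bound with the error $\epsilon_\delta(v)=1-e^{-\lambda[v]t}(1+\lambda[v]\delta)^{t/\delta}$. Since $\epsilon_\delta(v)\leq \sup_{w\in\Region}\epsilon_\delta(w)=1-\inf_{w\in\Region}[e^{-\lambda[w]t}(1+\lambda[w]\delta)^{t/\delta}]$, the uniform bound in the statement follows for every $v$. The argument for $Pmin$ is identical, since Theorem~\ref{thm:GuckDiscretization} also gives those bounds.

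The main obstacle is making sure that the infimum is meaningful: it should be finite and should not depend on more than the pointwise argument provides. The infimum is over numbers in $[0,1]$, since $1+x\leq e^x$ gives $(1+\lambda\delta)^{t/\delta}\leq e^{\lambda t}$, so it always exists. Nothing stronger is needed, because the bound only weakens each pointwise error term. It is nevertheless worth noting that $\Region$ is compact and $v\mapsto\lambda[v]$ is a maximum of polynomials, hence continuous. The infimum is therefore attained, which will matter later for computing it.
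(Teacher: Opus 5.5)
Your proposal is correct and follows the paper's proof: fix an arbitrary valuation $v\in\Region$, apply Theorem~\ref{thm:GuckDiscretization} to the instantiated MA, and bound the pointwise error $\epsilon_\delta(v)$ by its supremum over $\Region$. The paper leaves implicit two points you check explicitly: that discretization commutes with instantiation, and that the infimum is well-defined and attained. Both are harmless additions.
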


\begin{proof}
    Consider any valuation $v\in \Region$. By Theorem~\ref{thm:GuckDiscretization} the following bound is obtained: $
         Pmax[\dpMA[v],\diamond^{\leq t/\delta}G] \leq Pmax[\pMA[v],\diamond^{\leq t}G]
$.
 Therefore, we immediately prove our lower bound. Similarly, we get: 
    \begin{align*}
        Pmax[\pMA[v],\diamond^{\leq t}G] &\leq Pmax[\dpMA[v],\diamond^{\leq t/\delta}G] + 1- e^{-\lambda[v]t}(1+\lambda[v]\delta)^{t/\delta} \\
        &\leq Pmax[\dpMA[v],\diamond^{\leq t/\delta}G] + 1- \inf_{w\in R}[e^{-\lambda[w]t}(1+\lambda[w]\delta)^{t/\delta}]
    \end{align*}
    The proof for $Pmin[\cdot]$ is analogous. \qed
\end{proof}

The infimum in the error bound requires a maximization of the maximum sojourn rate. From this point onward, for readability, we will denote pMA and its corresponding discretization by $\MA$ and $\dMA$, respectively, whenever it is clear from the context that we are referring to the parametric model. In such cases, the discretization error will be written as $\epsilon_\delta = 1- \inf_{v\in \Region}[e^{-\lambda[v]t}(1+\lambda[v]\delta)^{t/\delta}]$.

\para{Problem statement.}
Now that pMA are formally defined, we provide a formal problem definition.
In contrast to MA, the analysis of pMA also needs to reason about the parameter valuations.
We solve two main problems for time-bounded reachability properties: (1)~\emph{satisfaction verification} checks whether there exists a valuation in the region which satisfies/violates the given property, and (2)~\emph{synthesis} partitions the parameter space into satisfying and violating subregions.

Four different satisfaction relations are considered, distinguishing cases where the \emph{player}---making the non-deterministic choices---and \emph{nature}---assigning the parameter values---can either be cooperative or adversarial, similar to Definition~14 in~\cite{DBLP:journals/fmsd/JungesAHJKQV24} for discrete-time models. 

\begin{definition}[Satisfaction relations]\label{def:satisfactionRelations}
    For a pMA $\MA$, region $\Region$ and time-bounded reachability property $\phi = \mathbb{P}[\diamond^{\leq t} G]~\sim\theta$, we define the following satisfaction relations:
    \begin{align*}
        \text{(Feasibility)}\quad&\MA,\Region \models_{f} \phi\quad\text{iff}\quad\exists \sigma\in TP(\MA)~\exists v\in \Region: \MA_\sigma[v]\models \phi\\
        \text{(Angelic-Aid)}\quad&\MA,\Region \models_{a} \phi\quad\text{iff}\quad\forall \sigma\in TP(\MA)~\exists v\in \Region: \MA_\sigma[v]\models \phi\\
        \text{(Robust)}\quad&\MA,\Region \models_{r} \phi\quad\text{iff}\quad\exists \sigma\in TP(\MA)~\forall v\in \Region: \MA_\sigma[v]\models \phi\\
        \text{(Demonic)}\quad&\MA,\Region \models_{d} \phi\quad\text{iff}\quad\forall \sigma\in TP(\MA)~\forall v\in \Region: \MA_\sigma[v]\models \phi
    \end{align*}
\end{definition}

\begin{figure}[tb]
    \centering
\begin{subtable}[t]{0.50\textwidth}
    \centering
\begin{tabular}{l|l|c|c|}
\multicolumn{2}{c}{}&\multicolumn{2}{c}{Player}\\
\cline{3-4}
\multicolumn{2}{c|}{}&Cooperative&Adversarial\\
\cline{2-4}
\multirow{2}{*}{\rotatebox{90}{Nature} }& Cooperative & \textit{Feasibility} &\textit{ Angelic-Aid} \\
\cline{2-4}
& Adversarial & \textit{Robust} & \textit{Demonic} \\
\cline{2-4}
\end{tabular}
    \caption{Different satisfaction relations.}
    \label{table:satisfactionRelations}
\end{subtable}
\hfill
\begin{subfigure}[t]{0.45\textwidth}
    \hspace{0cm}  
    \begin{tikzpicture}[->]
    \node[draw] (Feasibility) at (3.6,0) {$M,\Region \models_{f} \phi$};
    \node[draw] at (1.8,0.8) (angel) {$M,\Region \models_{a} \phi$};
    \node[draw] (robust) at (1.8,-0.8) {$M,\Region \models_{r} \phi$};
    \node[draw] (demonic) at (0,0) {$M,\Region \models_{d} \phi$};
    \path (demonic) edge (robust);
    \path (demonic) edge (angel);
    \path (angel) edge (Feasibility);
    \path (robust) edge (Feasibility);
    \end{tikzpicture}
    \caption{Ordering of satisfaction relations; arrows correspond to implications.}
    \label{fig:strategyRelations}
\end{subfigure}
    \caption{Satisfaction relations.}
\end{figure}

The bivariate table in Table~\ref{table:satisfactionRelations} provides an intuition of the satisfaction relations and whether the player/nature has a cooperative or adversarial attitude towards satisfying the property. Moreover, there is a clear ordering on these satisfaction relations, as shown in Figure~\ref{fig:strategyRelations}.

We now define the two main problems.
Given a pMA $\MA$, region $\Region$, time-bounded reachability property $\phi$ and satisfaction relation $\clubsuit\in \{f,a,r,d\}$, the \emph{satisfaction verification} problem is to decide whether $\MA,\Region \models_{\clubsuit} \phi$. The verification problems for discrete-time models considered in~\cite{DBLP:journals/fmsd/JungesAHJKQV24} are the robust and demonic cases of this problem. The \emph{synthesis} problem is to find a partitioning $\Region=T\sqcup F$ such that $\forall{T'}\subseteq{T}:~\MA,T' \models_{\clubsuit}~\phi$ and $\forall{F'}\subseteq{F}:~\MA,F' \not\models_{\clubsuit}\phi$. This definition is equivalent to the approximate synthesis problem in~\cite{DBLP:journals/fmsd/JungesAHJKQV24}. As for each valuation in $T,F$ the property must be satisfied/violated, feasibility and robust satisfaction relations yield the same result, and so do angelic-aid and demonic.

\section{Model transformations}\label{section:results}

Using Lemma~\ref{lemma:parametricMADiscretization}, one could try to naively perform parameter synthesis on the discretized pMA. However, the discretized pMA requires an adapted value-iteration scheme~\cite{DBLP:journals/eceasst/HatefiH12}, which combines unbounded reachability between PS states with time-bounded reachability in MS.
Instead we transform the discretized pMA as outlined in Figure~\ref{fig:process}, which allows us to apply existing analysis techniques for step-bounded reachability on pMDP.
In the remainder of this section, we discuss these transformations in more detail.

Section~\ref{ss:CollapsePS} focuses on transforming and discretizing the pMA into a collapsed discretized pMA, where the instantaneous transitions from PS states are collapsed into Markovian states, using state-elimination approaches such as~\cite{10.1007/978-3-319-47677-3_6}.
The adapted reachability scheme of the original model then yields the same result as step-bounded reachability on the transformed model.
We could then perform parameter synthesis on the resulting pMDP.
However, discretizing the pMA is expensive and would need to be performed for each subregion.
Instead, Section~\ref{ss:DependentRates} introduces a method to obtain approximate discretizations $\dMAlow,\dMAhigh$ which include the discretization as a parameter in the discretized pMA, allowing the transformations covered in this section to be a preprocessing step.

After the preprocessing, we instantiate the discretization for each parameter region and perform the verification.
The verification either terminates with a conclusive result, or we need to further partition the region and check each subregion individually.
This process, including how to take approximation errors into account, is described in Section~\ref{section:ParSynthStrategies}.

\begin{figure}[tb]
    \centering\resizebox{\textwidth}{!}{
    \begin{tikzpicture}[->]
    \node[draw] (ma) at (0,2) {pMA $\MA$};
    \node[draw] (ma_collapsed) at (6,2) {pMA $\MA$};
    \node[draw] (approx_discretization) at (12,2) {pMDP $\underline{\dMA},\overline{\dMA}$};
    \node[draw] (approx_discretization_collapsed) at (12,0) {pMDP $\underline{\dMA},\overline{\dMA}$};
    \node[draw] (approx_discretization_collapsed_instantiated) at (6,0) {pMDP $\underline{\dMA},\overline{\dMA}$};
    \node[draw] (verified) at (0,0) {Region verified};
    \node[circle,fill=black,minimum size=1pt] (decision) at (2.9,0) {};
    \path (ma) edge["Collapse PS chains"] (ma_collapsed);
    \path (ma_collapsed) edge["Approximate\\ Discretization",align=center](approx_discretization);
    \path (approx_discretization) edge[" Collapse PS \\ into MS",align=center](approx_discretization_collapsed);
    \path (approx_discretization_collapsed) edge [ " Instantiate\\discretization",align=center] (approx_discretization_collapsed_instantiated);
    \path (approx_discretization_collapsed_instantiated) edge[ "Verify"] (decision);
    \path (decision) edge[" Successful"] (verified);
    \draw[->] (decision) .. controls (4,1) and (5,1) .. (approx_discretization_collapsed);
    \node at (4.3,1) { Indecisive, Partition region};
\end{tikzpicture}}
    \caption{Outline of the process for different synthesis problems.}
    \label{fig:process}
\end{figure}

\subsection{Collapse of Probabilistic states}\label{ss:CollapsePS}
The original adapted VI scheme~\cite{DBLP:journals/eceasst/HatefiH12} only counts steps in the MDP if they are taken in states corresponding to Markovian states in the MA. Therefore, we propose a method to eliminate $\PS$ using state elimination~\cite{10.1007/978-3-319-47677-3_6}, such that all steps in the discretized model will be taken in $\MS$ and the alternative value-iteration scheme reduces to a step-bounded reachability query.

As time does not pass in a chain of probabilistic states and we assume non-zenoness, one can fix the policy in the first state of the chain and immediately get the distribution over Markovian states by eliminating the intermediate probabilistic states. Thus, a chain of $\PS$ can be reduced to a single state. In essence, this approach follows a similar idea to the MA to SAMA transformation in~\cite{GrosMScThesis}. This transformation naturally comes at the cost of an exponential blow-up in the number of available actions for the given chain. However, in practice, chains of PS tend to be short, which controls this growth of available actions.

To eliminate the remaining $\PS$, the pMA needs to be discretized, as pMA have a clear distinction between states with stochastic timing and non-deterministic choice. After discretization, we can collapse the PS into MS using well known elimination procedures~\cite{hahn2011probabilistic}.
\ifthenelse{\boolean{extended}}{%
	Appendix~\ref{appendix:algorithm} provides the details of this algorithm.
}{%
	Details of this algorithm can be found in~\cite{extended}.
}%
Correctness of this approach immediately follows from the adapted value-iteration scheme in~\cite{DBLP:journals/eceasst/HatefiH12}.
The cost of this transformation is again an exponential increase in the total number of actions with respect to the outdegree of Markovian states, which tends to be small in practice.

\subsection{Approximate discretization}\label{ss:DependentRates}

Throughout the discretization of the pMA, in accordance with Definition~\ref{def:MAdiscretization}, the parametric functions on the resulting pMDP will contain an exponential term. However, current methods~\cite{DBLP:journals/fmsd/JungesAHJKQV24,10.1007/978-3-642-20398-5_12} only allow for rational functions.
We therefore need to represent the exponential via rational functions. We briefly describe two methods: I) \emph{Minimizing/Maximizing the sojourn rates} to get the respective minimum/maximum reachability results, or II) \emph{approximating the exponential} by a partial Taylor expansion~\cite{stewart2021calculus}. Both of these result in over- and underapproximations of the final reachability probabilities. 

\para{Minimizing/Maximizing sojourn rates.}
The first method to obtain an over- and underapproximation of the reachability probability is to maximize/minimize the sojourn rates for each state independently from the probabilistic transitions of the same state. Clearly, maximizing these rates results in a higher probability of reaching the goal state within the set time bound, as states are left faster and the goal state can be reached earlier. In practice, this means that we set the sojourn rates of each state to $\max_{v\in V_\Region}\lambda[v](s)$ or $\min_{v\in V_\Region}\lambda[v](s)$.
We use the assumption of multi-affine parametric rate functions here to guarantee that the extrema are attained at the vertices of the region. The main drawback of this approach is that intra-state dependencies between sojourn rate and successor distribution are lost, potentially leading to severe approximation errors. 

\para{Approximating the exponential function.}
To maintain intra-state dependencies, we approximate the exponential function by a partial Taylor expansion:
\begin{align}\label{eq:expApprox}
    \forall N\in\mathbb{N}:\quad\sum_{k=0}^{2N+1}\frac{(-x)^k}{k!} \leq e^{-x} \leq \sum_{k=0}^{2N}\frac{(-x)^k}{k!}
\end{align}
The bounds in Equation~\ref{eq:expApprox} are directly implied by the alternating series remainder theorem~\cite{stewart2021calculus}.
Based on this approximation, we introduce the \emph{approximate discretized Markov Automaton}.

\begin{definition}[Approximate discretized pMA]\label{def:approxDiscretization}
            Given discretization step ${\delta>0}$, $N\in \mathbb{N}_{>0}$, and pMA $\MA=(S,A, s_0,\rightarrow, \Rightarrow)$.
            The \emph{approximate discretized parametric Markov Automaton} is given by $\dMA_N = (S,A,s_0, P')$, where:
    \begin{align*}
        &P'(s,s') = \begin{cases}
            (1-\sum_{k=0}^{N}\frac{(-\lambda(s))^k}{k!})p(s,s'), &\textit{ if } s\neq s'\\
            (1-\sum_{k=0}^{N}\frac{(-\lambda(s))^k}{k!})p(s,s') + \sum_{k=0}^{N}\frac{(-\lambda(s))^k}{k!}, &\textit{ otherwise.}\\
        \end{cases}
    \end{align*}
\end{definition}

 Using Theorem~\ref{thm:approxDiscretizationError}, given below, the approximate discretization is guaranteed to bound the reachability probability of the pMA, and the error of this additional approximation is also bounded in terms of the discretized pMA.

\begin{theorem}\label{thm:approxDiscretizationError}
        Given pMA $\MA=(S,A,s_0, \rightarrow, \Rightarrow), G\subseteq S,t>0 $. Fix discretization step $\delta>0$ so $t/\delta\in\mathbb{N}$ and pick $K\in\mathbb{N}_{>0}$. Let $\dMAlow = \dMA_{2K}$, $\dMAhigh = \dMA_{2K+1}$ and $\lambda =\sup_{v\in\Region}\lambda[v]$. Then we have that: 
        \begin{align*}
        &Pmax[\dMA,\diamond^{\leq t/\delta}G] - \epsilon^{2K}_{t/\delta} \leq Pmax[\dMAlow,\diamond^{\leq t/\delta}G]\\ 
        \leq &Pmax[\MA,\diamond^{\leq t}G]\\ 
        \leq &Pmax[\dMAhigh,\diamond^{\leq t/\delta}G] + \epsilon_\delta \leq Pmax[\dMA,\diamond^{\leq t/\delta}G] + \epsilon^{2K+1}_{t/\delta} + \epsilon_\delta,  
        \end{align*}
        where $\epsilon^M_{t/\delta} = (1+\frac{(\lambda\delta)^M}{M!})^{t/\delta}-1$ and $\epsilon_\delta$ is the discretization error as per Lemma~\ref{lemma:parametricMADiscretization}.
    Moreover, the same result holds for $Pmin[\cdot]$. 
\end{theorem}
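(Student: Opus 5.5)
The plan is to prove the chain of five inequalities as three separate claims and then concatenate them. Throughout, I read the exponent in Definition~\ref{def:approxDiscretization} as $\lambda(s)\delta$, matching Definition~\ref{def:MAdiscretization}. I also work with the collapsed models of Section~\ref{ss:CollapsePS}, so that every step is taken in a Markovian state and all quantities are plain step-bounded reachability values of a (p)MDP.

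Since every statement is pointwise in the valuation, I fix $v\in\Region$ and argue about the instantiated models. The instantiated discretization $\dMA[v]$ and the approximations $\dMAlow[v],\dMAhigh[v]$ then share their successor distributions $p(s,\cdot)$ and differ only in the self-loop (``stay'') probability. This stay probability is $q(s)=e^{-\lambda(s)\delta}$ in $\dMA$ and $q_M(s)=\sum_{k=0}^{M}(-\lambda(s)\delta)^k/k!$ in $\dMA_M$. By Equation~\eqref{eq:expApprox}, $q_{2K+1}(s)\le q(s)\le q_{2K}(s)$. For $K\ge1$ and small enough $\delta$ (so that $\lambda\delta$ lies in the range where the truncations stay in $[0,1]$, which I would state as a side condition), these are valid probabilities.

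\emph{Step 1 (monotonicity in the stay probability).} I will show that step-bounded reachability values decrease as stay probabilities increase. Let $x_n(s)$ be the optimal $n$-step value and $y_n(s,\alpha)=\sum_{s'}p_\alpha(s,s')x_n(s')$ the value of leaving via $\alpha$. Then $x_{n+1}(s)=\max_\alpha[q(s)x_n(s)+(1-q(s))y_n(s,\alpha)]$ for $s\notin G$, and $x_n=1$ on $G$. I prove by induction on $n$ the invariant $x_n(s)\le \max_\alpha y_n(s,\alpha)$ together with monotonicity of $x_n$ in $n$. Given the invariant, each update is a convex combination that puts more weight on the smaller term when $q$ grows, and a second induction compares two models with $q\le q'$ pointwise. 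The same argument with $\min$ handles $Pmin$.

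Applying this with $q_{2K+1}\le q\le q_{2K}$ gives $Pmax[\dMAlow]\le Pmax[\dMA]\le Pmax[\dMAhigh]$. Combined with Lemma~\ref{lemma:parametricMADiscretization}, this yields the middle two inequalities: $Pmax[\dMAlow]\le Pmax[\dMA]\le Pmax[\MA,\diamond^{\le t}G]$, and $Pmax[\MA,\diamond^{\le t}G]\le Pmax[\dMA]+\epsilon_\delta\le Pmax[\dMAhigh]+\epsilon_\delta$.

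\emph{Step 2 (error bounds).} By the alternating series remainder, $|q_M(s)-q(s)|\le (\lambda\delta)^{M+1}/(M+1)!\le r_M:=(\lambda\delta)^M/M!$, using $\lambda\delta\le 1$ (part of the side condition) and $\lambda=\sup_v\lambda[v]$. Consequently each row of the two transition matrices differs only by moving mass at most $r_M$ between the self-loop and the outgoing distribution.

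I then prove $\|x^M_n-x_n\|_\infty\le e_n$ by induction, where $e_0=0$ and $e_{n+1}\le e_n+r_M$. The recursion holds because the one-step Bellman operators are nonexpansive in sup-norm, and on vectors in $[0,1]$ they differ by at most $r_M$; the max or min over actions preserves both facts. This gives the bound $(t/\delta)\,r_M\le (1+r_M)^{t/\delta}-1=\epsilon^M_{t/\delta}$ via Bernoulli's inequality. Instantiating $M=2K$ and $M=2K+1$ gives the outer two inequalities, and taking the bound uniformly over $v\in\Region$ lifts everything to the parametric statement.

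I expect Step~1 to be the main obstacle: showing that a larger self-loop can only hurt even under an optimizing or minimizing scheduler. The delicate point is the invariant $x_n(s)\le\max_\alpha y_n(s,\alpha)$, i.e.\ that waiting never beats leaving. I would first try to get it from the monotonicity $x_n\le x_{n+1}$ and carry both facts in a joint induction.
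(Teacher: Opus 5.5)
Your proposal is correct. It uses the same decomposition as the paper, which proves the theorem as a corollary of Lemma~\ref{lemma:pMAapproxApproximations} (the two middle inequalities) and Lemma~\ref{lemma:approxDiscretizationBounds} (the two outer ones), but you justify each piece differently.

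For the middle part, the paper only asserts that a larger self-loop probability lowers time-bounded reachability. Your invariant supplies the missing proof, and it is easier than you expect. First, $x_n\le x_{n+1}$ holds on its own, because the Bellman operator is monotone and $x_1\ge x_0$. Then, writing $Y_n(s)=\max_\alpha y_n(s,\alpha)$ for $s\notin G$, the bound $x_n(s)=q(s)x_{n-1}(s)+(1-q(s))Y_{n-1}(s)\le Y_{n-1}(s)\le Y_n(s)$ is a one-line induction. No joint induction is needed.

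For the outer part, the two routes differ as follows:
\begin{itemize}
\item The paper keeps the probabilistic states and treats them as a separate case in its induction. It manipulates the Bellman equation into the recurrence $\epsilon_{k+1}=(1+r)\epsilon_k+r$ with $r=(\lambda\delta)^M/M!$, whose solution is exactly $(1+r)^k-1$.
\item You combine sup-norm nonexpansiveness with a one-step perturbation bound. This yields the linear, sharper bound $(t/\delta)\,r_M$, and you recover the stated $\epsilon^M_{t/\delta}$ only by relaxing it via Bernoulli.
\end{itemize}
Your route is shorter and gives a better constant. The cost is that it relies on the collapse lemma of Appendix~\ref{appendix:algorithm}, which assumes $G\subseteq\MS$ and $s_0\in\MS$. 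This is harmless: goal states in $\PS$ can be made absorbing Markovian, and a probabilistic initial state adds only one more max over actions, which is monotone and nonexpansive.

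Your explicit condition on $\lambda\delta$ is also warranted. The paper leaves it implicit, but without it $\dMA_{2K}$ and $\dMA_{2K+1}$ need not be MDPs. With $q_M(x)=\sum_{k=0}^{M}(-x)^k/k!$, one has $q_2(x)>1$ for $x>2$ and $q_3(2)<0$. Both of your arguments need the stay probabilities to lie in $[0,1]$.
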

 The proof of Theorem~\ref{thm:approxDiscretizationError} can be found \ifthenelse{\boolean{extended}}{%
	in Appendix~\ref{appendix:approximateDiscretization}.
}{%
	in the extended version of this paper~\cite{extended}.
}

\begin{figure}[tb]
    \centering
    \includegraphics[width=0.75\linewidth]{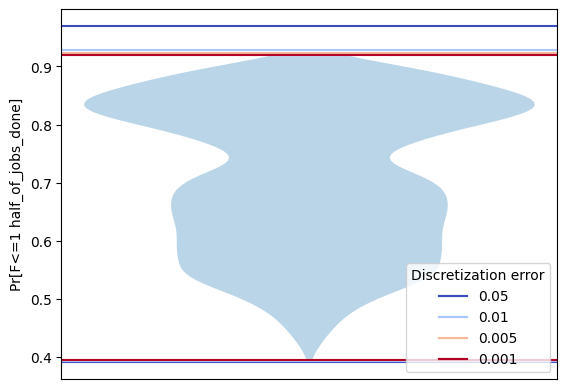}
    \caption{Violinplot of probabilities for $10^4$ pMA instantiations on Jobs benchmark with $\Region = [1,3]^2$; horizontal lines are the approximate discretization bounds.}
\label{fig:approximateBoundsVsRandomInstantiation}
\end{figure}

The discretization error $\epsilon_\delta$ dominates the error made by the approximate discretization.
This can be seen in Figure~\ref{fig:approximateBoundsVsRandomInstantiation}, where the resulting probabilities for pMA instantiations are plotted against the bounds obtained using the approximate discretization with $N$ equal to $1,2$ for $\dMAhigh,\dMAlow$ respectively.

\section{Solving satisfaction verification and synthesis problems}\label{section:ParSynthStrategies}
Section~\ref{section:results} focused on the necessary transformations to make step-bounded reachability on the discretized MA possible.
This section focuses on the verification of regions on the resulting MDP using parameter lifting~\cite{DBLP:conf/atva/QuatmannD0JK16}.

\subsection{Parameter lifting}\label{ss:ParameterLifting}
\emph{Parameter Lifting (PL)} is a technique which abstracts inter-state dependencies away with a transformation from pMDP to parameter-free \emph{stochastic games (SG)}~\cite{DBLP:conf/atva/QuatmannD0JK16}. The extremal values over the parametric transition functions are precomputed, and nature, as a second player, chooses an extremal parameter value which, depending on the satisfaction relation, minimizes/maximizes the time-bounded reachability probability~\cite{DBLP:journals/fmsd/JungesAHJKQV24}.
This relaxation introduces another error which becomes smaller when the absolute difference between extrema decreases. 

As PL assumes that extrema are attained at the vertices of the region~\cite{DBLP:conf/atva/QuatmannD0JK16}, we must ensure that this holds throughout the transformations from Section~\ref{section:results}. Therefore, we assume that no two states in a chain of probabilistic states share a parametric function defined over the same variable. Moreover, to cover the collapse of PS in MS, we assume that the sets of parameters over rate- and probabilistic transition functions are disjoint. These assumptions can be enforced, if needed, by introducing auxiliary parameters. We show \ifthenelse{\boolean{extended}}{%
	in Appendix~\ref{appendix:extremaPL}
}{%
	in~\cite{extended}
}that I) by adding auxiliary parameters, one does not lose the guarantee of reasoning up to arbitrary precision over the original parameter space and II) these assumptions suffice to guarantee correctness of PL. 
None of the models we have come across from the literature are negatively impacted by these assumptions.
Moreover, the same assumptions are used for pCTMC~\cite{DBLP:journals/acta/CeskaDPKB17}, motivating that these conditions are weak enough to model realistic scenarios.    

\subsection{Solving the safisfaction verification and synthesis problems}\label{ss:solvingSynthesisProblems}
The problems can now be solved by comparing the bounds obtained through PL with the threshold $\theta$, using a well-known refinement scheme~\cite{DBLP:journals/acta/CeskaDPKB17,DBLP:journals/fmsd/JungesAHJKQV24,vcevska2019shepherding,DBLP:journals/corr/BraitlingFHWBH14}:
{\par\nopagebreak\small\noindent\ignorespaces%
\small\begin{align*}
    &M,\Region \models_{\clubsuit} \mathbb{P}[\diamond^{\leq t}G]\leq \theta =\begin{cases}
        \text{Satisfied, if }upper_\clubsuit \leq \theta,\\
        \text{Violated, if }lower_\clubsuit > \theta,\\
        \text{Inconclusive and needs further splitting}, \textit{Otherwise.}\\
    \end{cases}
\end{align*}}%

The bounds $upper_\clubsuit$/$lower_\clubsuit$ differ per satisfaction relation and are affected by the accumulated errors. Recall the three error terms: i) discretization error~$\epsilon_\delta$, ii) the error of the approximate discretization, and iii) the PL error. We denote $Pr_{p,n}[\dMAlow,\diamond^{\leq t/\delta}G]$ with $p,n\in\{\max,\min\}$ to indicate whether player and nature are trying to maximize/minimize the reachability probability in the SG, which results from applying PL on either pMDP $\dMAlow$ or $\dMAhigh$. 

The satisfaction verification for different satisfaction relations can now be derived using the bounds below.
For feasibility and angelic-aid, due to the existential quantification over the valuations, one witness valuation $v\in \Region$ as upper bound suffices to conclude that the entire region is satisfying.
Similarly, for demonic and robustness, one counterexample suffices to conclude that the region is violating. Thus, the scheme can be optimized by terminating in such cases.
{\par\nopagebreak\small\noindent\ignorespaces%
\begin{align*}
    [lower_\clubsuit, upper_\clubsuit]\in\begin{cases}
        [Pr_{\min,\min}[\dMAlow, \diamond^{\leq t/\delta}G] , Pmin[\dMAhigh[v], \diamond^{\leq t/\delta}G]+\epsilon_\delta]\text{, if }\clubsuit=f,\\
        [Pr_{\max,\min}[\dMAlow, \diamond^{\leq t/\delta}G] , Pmax[\dMAhigh[v], \diamond^{\leq t/\delta}G]+\epsilon_\delta]\text{, if }\clubsuit=a,\\
        [Pmin[\dMAlow[v], \diamond^{\leq t/\delta}G] , Pr_{\min,\max}[\dMAhigh, \diamond^{\leq t/\delta}G]+\epsilon_\delta]\text{, if }\clubsuit=r,\\
        [Pmax[\dMAlow[v], \diamond^{\leq t/\delta}G] , Pr_{\max,\max}[\dMAhigh, \diamond^{\leq t/\delta}G]+\epsilon_\delta]\text{, if }\clubsuit=d.
    \end{cases}
\end{align*}}%

Bounds for the synthesis problem are derived similarly to the satisfaction verification.
However, the early termination criterion does not apply anymore, as we aim to classify each valuation as violating or satisfying.
{\par\nopagebreak\small\noindent\ignorespaces%
\small\begin{align*}
	[lower_\clubsuit, upper_\clubsuit]{\in}\begin{cases}
		[Pr_{\min,\min}[\dMAlow, \diamond^{\leq t/\delta}G] , Pr_{\min,\max}[\dMAhigh, \diamond^{\leq t/\delta}G]{+}\epsilon_\delta]\text{, if }\clubsuit{\in}\{f,r\},\\
		[Pr_{\max,\min}[\dMAlow, \diamond^{\leq t/\delta}G] , Pr_{\max,\max}[\dMAhigh, \diamond^{\leq t/\delta}G]{+}\epsilon_\delta]\text{, if }\clubsuit{\in}\{a,d\}.
    \end{cases}
\end{align*}}%

\noindent In practice, the synthesis is sample-guided, similar to synthesis methods for other models~\cite{DBLP:journals/fmsd/JungesAHJKQV24,DBLP:journals/acta/CeskaDPKB17}. First, a relatively fast check is performed on the instantiated MA $\MAval{v}$, for an $v\in \Region$, whose value then determines which inequality to check.

The bounds for both problems are affected by the discretization error $\epsilon_\delta$. While the parameter lifting error approaches zero as the region becomes smaller, the discretization error remains constant.
We may therefore encounter cases where the outcome remains inconclusive for every subregion. To solve problems up to arbitrary precision, one must detect this situation and decrease the discretization step $\delta$ accordingly. A conservative method is presented in Section~\ref{section:experiments}.

In summary, using the methodologies described in Sections~\ref{section:results}-\ref{section:ParSynthStrategies}, we cover the entire process as depicted in Figure~\ref{fig:process}.

\section{Experimental evaluation}\label{section:experiments}

\para{Implementation.}
We implemented our approach, supporting both pCTMCs and pMAs, using the \stormpy Python interface of the \storm probabilistic modelchecker~\cite{DBLP:journals/sttt/HenselJKQV22}.
The implementation is publicly available on GitLab.\footnote{\url{https://gitlab.tue.nl/k.v.d.glind/discretize-ma}}

The current implementation relies on explicit labels on goal-states and is able to check time-bounded reachability properties. Moreover, it can be easily extended to time-bounded until queries $\textit{safe}\; U^{\leq t}\;G$, where $\textit{safe},G\subseteq S$~\cite{DBLP:journals/corr/BraitlingFHWBH14}.

One can configure in which order to verify regions and when to decrease the discretization error, as it does not affect the theoretical guarantees. However, as it impacts performance, we briefly cover our choices. As the discretization error is the main component in performance degradation, we implemented a conservative method to decrease the discretization error. Intuitively, our approach uses that the difference between $Pr_{p,\max}[\cdot] $ and $Pr_{p,\min}[\cdot]$ tends to zero as the volume of the region decreases. Thus, if $Pr_{p,\min}[\cdot]+\epsilon_\delta>\theta $ whilst $Pr_{p,\min}[\cdot] \leq \theta$, no subregion will satisfy $Pr_{p,\max}[\cdot] +\epsilon_\delta\leq\theta$. Therefore, if $Pr_{p,\max}[\cdot] \leq \theta$ as well, splitting the region will not help and the discretization error should be reduced instead. The subregions that need to be checked after splitting are handled as a FIFO queue, ensuring that larger regions are handled first.
We implemented an additional heuristic which tracks a frontier of the already decided parameter space, allowing to prioritize regions bordering decided regions.

\begin{table}[tb]
    \caption{Benchmark statistics}
    \label{table:benchmarkStats}
    \centering
    \resizebox{0.8\textwidth}{!}{%
    \setlength{\tabcolsep}{12pt}
    \begin{tabular}{c|c|r|c|c}
        \textbf{Name} & Type &$|S|$ & $|\mathcal{X}|$ & Property \\
        \hline
erlang & pMA & 187 & 1 & $\mathbb{P}[\diamond^{\leq 5}\text{Goal}] \leq 0.2$\\
flexible-manufacturing3 & pMA & 1675 & 2 & $\mathbb{P}[\diamond^{\leq 1}\text{M3\_fail}] \leq 0.05$\\
jobs5-2 & pMA & 117 & 2 & $\mathbb{P}[\diamond^{\leq 3}\text{ `half\_of\_jobs\_finished'}] \leq 0.65$\\
polling & pMA & 990 & 2 & $\mathbb{P}[\diamond^{\leq 3}\text{q1\_full}] \leq 0.55$\\
readers-writers5 & pMA & 842 & 2 & $\mathbb{P}[\diamond^{\leq 2} \text{many\_requests}]\geq 0.01$\\
stream2-4 & pMA & 176 & 2 & $\mathbb{P}[\diamond^{\leq 3}\text{ `done'}] \leq 0.1$\\
kanban & pCTMC & 4600 & 2 & $\mathbb{P}[\diamond^{\leq 4}\text{done}]\leq0.02$\\
polling9 & pCTMC & 6912 & 2 & $\mathbb{P}[\diamond^{\leq 1}\text{first\_queue\_served}]\leq0.03$\\
tandem & pCTMC & 861 & 2 & $\mathbb{P}[\diamond^{\leq 0.3} \text{first\_queue}]\leq0.8$
    \end{tabular}}
\end{table}

\para{Set-up.}
All experiments were run on an HP Zbook G10 with a 13th Gen Intel(R) Core(TM) i7-13700H and 16GB of RAM. The implementation ran on a single core. 
Unless specified otherwise, we terminate the synthesis algorithm when at least 99.99\% of the parameter space is decided or the time limit of 15 minutes has been reached. 
The initial discretization error was set at $\theta/2$ where $\theta$ is the threshold probability as specified in Table~\ref{table:benchmarkStats}. To obtain the approximate discretized MAs $\dMAhigh, \dMAlow$, $N$ has been set to $1$ and $2$ respectively. This suffices since, as motivated in Section~\ref{ss:DependentRates}, even for small $N$ the introduced error is negligible compared to the discretization error.  
Lastly, in order to obtain an appropriate performance study of our underlying theoretical approach, all experiments were run without the frontier heuristic enabled.

\para{Benchmarks.}
We use a number of Markov Automata and CTMCs included in the QCOMP benchmark suite~\cite{DBLP:conf/tacas/HartmannsKPQR19}. Table~\ref{table:benchmarkStats} provides an overview of the models.
We focus on benchmarks where either a time-bounded property was specified or a natural extension from an unbounded reachability probability was possible.
We adapted the models to include parameters and ensured that, for every region, the original model specification is included in this region.

\para{Comparison to \prismpsy.}
\begin{table}[tb]
    \caption{\prismpsy compared to our method.}
    \label{tab:prism-psy}
    \centering\resizebox{0.8\textwidth}{!}{
    \setlength{\tabcolsep}{12pt}
    \begin{tabular}{c|c|c|c}
        Model& Region & \prismpsy & Ours \\
        \hline
         kanban & $redo3\in[0.1,0.7], ok3\in[0.2,0.9]$&99.9\%, 67s & 73.4\%, TO\\
         polling9 & $gamma\in[10,15], mu\in[0.1,0.5]$ &99.9\%, 20s & 89.1\%, TO\\
         tandem & $p\in[0.01,0.99], kappa\in[0.5,1.5]$ & not supported&84.1\%, TO 
    \end{tabular}}
\end{table}
We compare our method with \prismpsy~\cite{DBLP:journals/acta/CeskaDPKB17} on pCTMCs.
We solve the parameter synthesis problem and compare the decided fraction of the two tools within a time window of 15 minutes in Table~\ref{tab:prism-psy}.
We see that \prismpsy outperforms our approach, mainly because it does not require discretization. However, our method focuses on the novel ability to analyse pMA, which is not possible in \prismpsy. Moreover, the implementation of \prismpsy is restricted to parametric rates of the form $R(s,s') = c \cdot x_i$ for parameter $x_i$ and constant ${c>0}$. We do not impose such conditions and are able to analyze the \textit{tandem} model which involves parametric transition probabilities with fixed sojourn rates. This model cannot be analyzed with \prismpsy.

\para{Satisfaction verification of pMA.}
We check for a given benchmark model and parameter region whether the property holds under the specified satisfaction relation. 
The results are displayed in Table~\ref{tab:verificationResultsExtract}.
All benchmarks produced a conclusive answer (column S/V), and most finished within a second of total runtime, which includes both model transformation and verification. 
For more complex models, model preprocessing becomes the main factor, displaying the limited scalability of the transformations presented in Section~\ref{ss:CollapsePS}.
Table~\ref{table:regionRuntimeExtract} provides a breakdown of building and model checking times for varying region sizes and discretization errors.
Verification times correlate with the maximum sojourn rate and discretization error, following our theoretical findings. The effect of the number of parameters on verification time stems from the additional nondeterministic choices added to the parameter-free SG. For an in-depth time breakdown, we refer the reader \ifthenelse{\boolean{extended}}{%
	to Appendix~\ref{appendix:tables}.
}{%
	to the extended version~\cite{extended}.
}

\begin{table}[tb]
\ifthenelse{\boolean{extended}}{%
\caption{Satisfaction verification results. \textbf{S}atisfied/\textbf{V}iolated property. Extract of the full Table~\ref{table:verificationResults} in 
	Appendix~\ref{appendix:tables}.}}{%
    \caption{Satisfaction verification results. \textbf{S}atisfied/\textbf{V}iolated property. Extract of the full Table in~\cite{extended}.}}\label{tab:verificationResultsExtract}

\centering
\resizebox{\textwidth}{!}{%
    \setlength{\tabcolsep}{12pt}
\begin{tabular}{c|c|c|r|c}
Name & Region & Sat. relation & Time(s) & S/V \\
\hline
erlang & $R\in[7,10]$ & Feasibility & 0.13 & 1/0 \\
erlang & $R\in[7,10]$ & Angelic-aid & 0.04 & 0/1 \\
flexible-manufacturing3 & $rate1\in[0.5,2.5], rate2\in[0.5,2.5]$ & Feasibility & 0.97 & 1/0 \\
flexible-manufacturing3 & $rate1\in[0.5,2.5], rate2\in[0.5,2.5]$ & Angelic-aid & 0.97 & 1/0 \\
jobs5-2 & $x_j1\in[1,3], x_j2\in[2,4]$ & Feasibility & 0.46 & 1/0 \\
jobs5-2 & $x_j1\in[1,3], x_j2\in[2,4]$ & Angelic-aid & 0.29 & 0/1 \\
kanban & $redo3\in[0.1,0.7], ok3\in[0.2,0.9]$ & Angelic-aid & 11.85 & 1/0 \\
polling9 & $gamma\in[10,15], mu\in[0.1,0.5]$ & Angelic-aid & 5.37 & 1/0 \\
polling & $inRate1\in[2,4], inRate2\in[3,6]$ & Feasibility & 5.23 & 1/0 \\
polling & $inRate1\in[2,4], inRate2\in[3,6]$ & Angelic-aid & 6.88 & 1/0 \\
readers-writers5 & $readingRate\in[3,7], rate2\in[3,10] $& Feasibility & 0.15 & 1/0 \\
readers-writers5 & $readingRate\in[3,7], rate2\in[3,10]$ & Angelic-aid & 0.15 & 1/0 \\
stream2-4 & $inRate\in[1,7], processingRate\in[1,7]$ & Feasibility & 0.10 & 1/0 \\
stream2-4 & $inRate\in[1,7], processingRate\in[1,7]$ & Angelic-aid & 0.25 & 1/0 \\
tandem & $p\in[0.01,0.99], kappa\in[0.5,1.5]$ & Angelic-aid & 5.55 & 1/0
\end{tabular}}
\end{table}

\begin{table}[tb]
\ifthenelse{\boolean{extended}}{%
\caption{Time-breakdown of checking a single region. MC time includes all 4 possible calls $Pr_{p,n}[\dMAlow,\diamond^{\leq t/\delta}G]$. Extract of the full Table~\ref{table:regionRuntime} in Appendix~\ref{appendix:tables}.}}{%
\caption{Time-breakdown of checking a single region. MC time includes all 4 possible calls $Pr_{p,n}[\dMAlow,\diamond^{\leq t/\delta}G]$. Extract of the full Table in~\cite{extended}.}
}\label{table:regionRuntimeExtract}

\centering
\resizebox{0.8\textwidth}{!}{
\setlength{\tabcolsep}{12pt}
\begin{tabular}{c|c|c|r|r}
Benchmark & Region & $\epsilon_\delta$ & Build Time (s) & MC Time (s) \\
\hline
erlang & $R\in[1,2]$ & 0.001 & 0.023 & 0.121 \\
erlang & $R\in[1,2]$ & 0.0001 & 0.023 & 1.359 \\
erlang & $R\in[19,20]$ & 0.001 & 0.024 & 12.517 \\
erlang & $R\in[19,20]$ & 0.0001 & 0.023 & 124.381 \\
polling & $inRate1\in[2,4]$, $inRate2\in[3,6]$ & 0.001 & 5.259 & 8.109 \\
polling & $inRate1\in[2,4]$, $inRate2\in[3,6]$ & 0.0001 & 5.309 & 78.210 \\
jobs5-2-2params & $x_j1\in[1,3]$, $x_j2\in[2,4]$ & 0.001 & 0.245 & 0.562 \\
jobs5-2-3params & $x_j1\in[1,3]$, $x_j2\in[2,4]$, $x_j3\in[1,3]$ & 0.001 & 0.254 & 0.654 \\
jobs5-2-4params & $x_j1\in[1,3]$, $x_j2\in[2,4]$, $x_j3\in[1,3]$, $x_j4\in[2,4]$ & 0.001 & 0.246 & 1.166 
\end{tabular}}
\end{table}

\para{Parameter synthesis for pMA.}
\begin{table}[tb]
\caption{Parameter synthesis results. Fraction of parameter space \textbf{S}atisfying / \textbf{V}iolating / \textbf{U}nknown on property. Fractions are rounded to 3 decimals.}
\label{tab:spacePartitioning}
\centering
\resizebox{\textwidth}{!}{%
    \setlength{\tabcolsep}{12pt}
\begin{tabular}{c|c|c|r|c}
Name & Region & Sat. relation & Time(s) & S/V/U \\
\hline
erlang & $R\in[7,10]$ & Demonic & 0.03 & 0.000/1.000/0.000 \\
erlang & $R\in[7,10]$ & Robust & 336.38 & 0.311/0.689/$<10^{-4}$ \\
flexible-manufacturing3 & $rate1\in[0.5,2.5], rate2\in[0.5,2.5]$ & Demonic & TO & 0.687/0.291/0.022 \\
flexible-manufacturing3 & $rate1\in[0.5,2.5], rate2\in[0.5,2.5]$ & Robust & TO & 0.687/0.291/0.022 \\
jobs5-2 & $x_j1\in[1,3], x_j2\in[2,4]$ & Demonic & 0.31 & 0.000/1.000/0.000 \\
jobs5-2 & $x_j1\in[1,3], x_j2\in[2,4]$ & Robust & TO & 0.416/0.577/0.007 \\
kanban & $redo3\in[0.1,0.7], ok3\in[0.2,0.9]$ & Robust & TO & 0.570/0.164/0.266 \\
polling9 & $gamma\in[10,15], mu\in[0.1,0.5]$ & Robust & TO & 0.884/0.007/0.109 \\
polling & $inRate1\in[2,4], inRate2\in[3,6]$ & Demonic & TO & 0.199/0.784/0.017 \\
polling & $inRate1\in[2,4], inRate2\in[3,6]$ & Robust & TO & 0.900/0.083/0.017 \\
readers-writers5 & $readingRate\in[3,7], rate2\in[3,10]$ & Demonic & TO & 0.617/0.031/0.352 \\
readers-writers5 & $readingRate\in[3,7], rate2\in[3,10]$ & Robust & TO & 0.617/0.031/0.352 \\
stream2-4 & $inRate\in[1,7], processingRate\in[1,7]$ & Demonic & TO & 0.494/0.496/0.010 \\
stream2-4 & $inRate\in[1,7], processingRate\in[1,7]$ & Robust & TO & 0.830/0.162/0.007 \\
tandem & $p\in[0.01,0.99], kappa\in[0.5,1.5]$ & Robust & TO & 0.500/0.314/0.186
\end{tabular}}
\end{table}
The synthesis results are displayed in Table~\ref{tab:spacePartitioning}. Note that the checked regions are exactly the same as in Table~\ref{tab:verificationResultsExtract}. For each benchmark, we checked robust and demonic satisfaction relations. In accordance with Figure~\ref{fig:strategyRelations}, the fraction of satisfying valuations for the robust satisfaction relation is always at least as large as that for the demonic relation.

\begin{figure}[tb]
\hspace{-1cm}
\centering
\begin{subfigure}[t]{0.35\textwidth}
\includegraphics[scale=0.35]{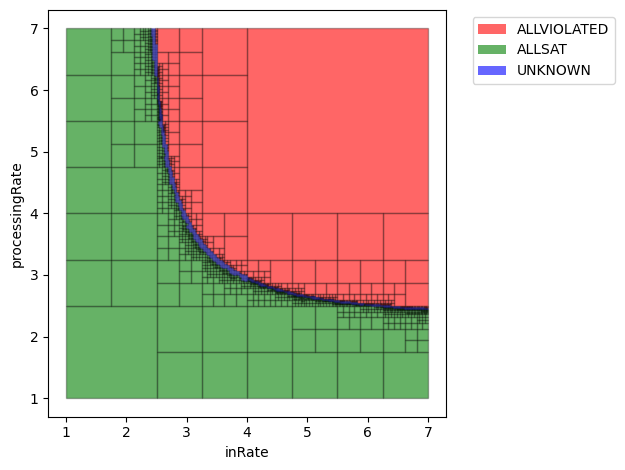}
\caption{Demonic}
\end{subfigure}%
~\hspace{1cm}
\begin{subfigure}[t]{0.35\textwidth}
\includegraphics[scale=0.35]{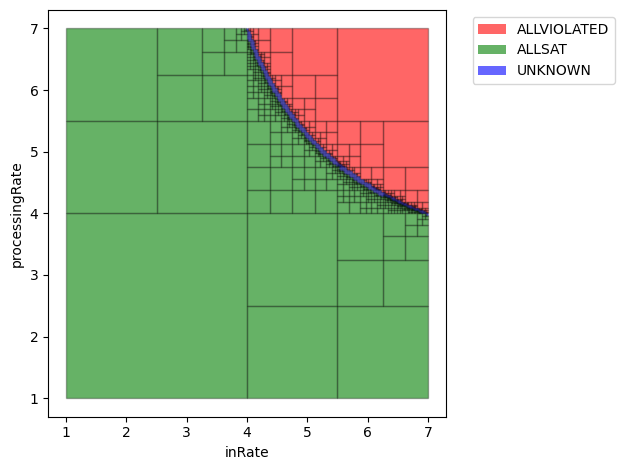}
\caption{Robust}
\end{subfigure}
    \caption{Parameter synthesis on \textit{stream2-4} under different satisfaction relations.}
    \label{fig:streamingSpacePartitioning}
\end{figure}
Figure~\ref{fig:streamingSpacePartitioning} shows the space partitioning obtained for the \textit{stream} benchmark. Each rectangle in the space gives the region on which this decision was based. 
We note that extensive portions of the parameter space can be resolved quickly, while more detailed analysis is necessary near the boundary between satisfaction and violation. \ifthenelse{\boolean{extended}}{%
	Appendix~\ref{appendix:spacePartitionFigures} provides the synthesis figures for all other benchmarks.
}{%
	Synthesis figures for all other benchmarks are provided in~\cite{extended}.
} 

\begin{figure}[tb]
    \centering
    \resizebox{0.7\textwidth}{!}{
    \includegraphics[]{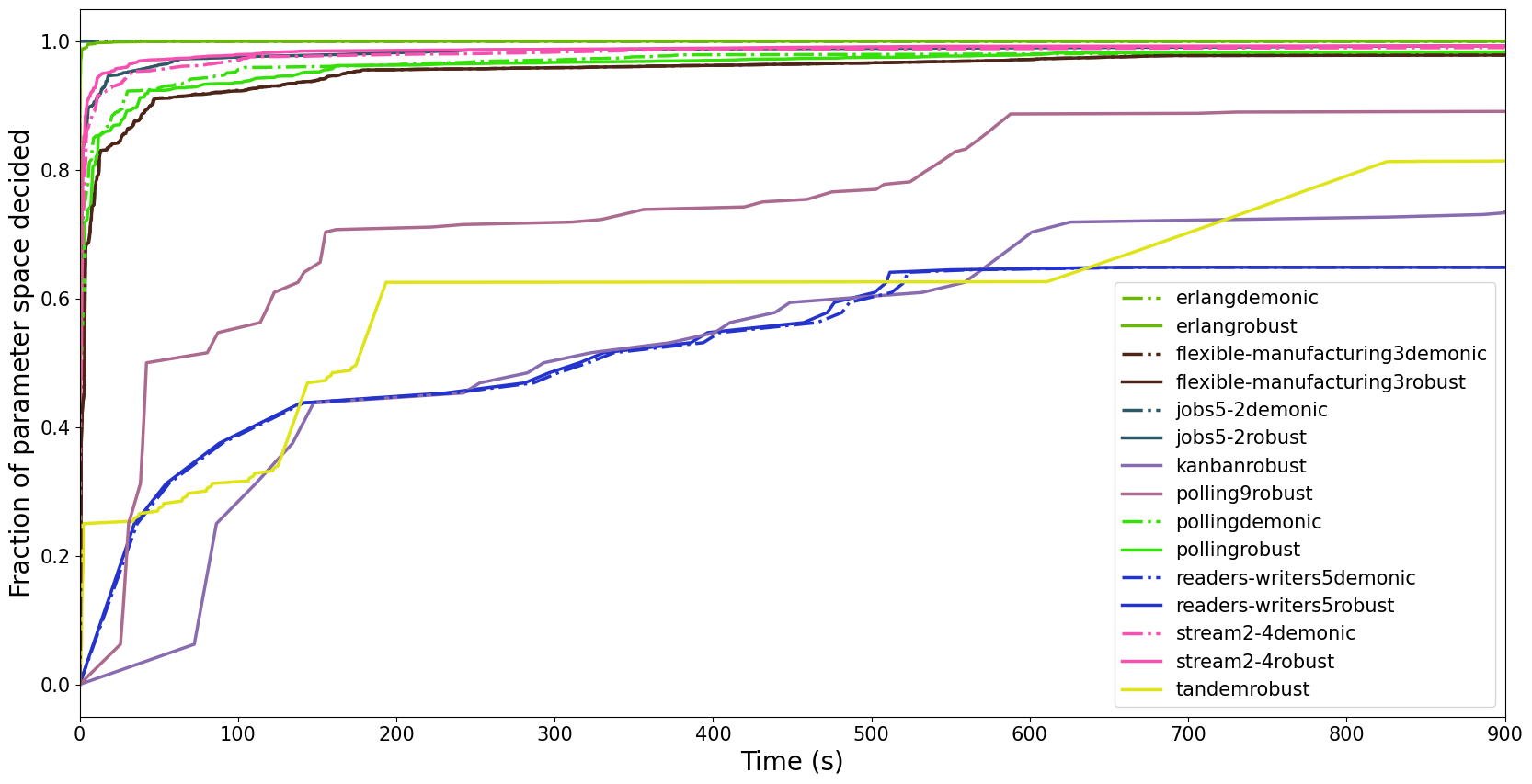}}
    \caption{Ratio of decided area over runtime for pMA benchmarks.}
    \label{fig:RuntimeVsAreaDecided}
\end{figure}
In Table~\ref{tab:spacePartitioning}, we observe that almost all benchmarks reach the timeout. This occurs for multiple reasons.
First, determining whether a region satisfies a property requires a fixed number of model checking queries. As each subregion is progressively refined, accounting for an increasingly smaller portion of the overall parameter space, an equivalent computational effort is required for a diminished gain. This phenomenon is also observed in other methods, including work on discrete-time models~\cite{DBLP:journals/fmsd/JungesAHJKQV24,DBLP:conf/rtss/HanKM08,DBLP:journals/acta/CeskaDPKB17}. Figure~\ref{fig:RuntimeVsAreaDecided} illustrates this perceived stagnation, where the decided fraction over time is plotted. In theory one converges to the total area decided, in practice we observe an exponential deceleration of gain in decided parameter space.
Second, as it is necessary to decrease the discretization error, the computational cost of each model-checking query is increased as well. 
Therefore, the perceived deceleration in Figure~\ref{fig:RuntimeVsAreaDecided} is exacerbated.

\section{Conclusion}
\label{section:conclusion}

We introduced the notion of parametric Markov Automata, extending Markov Automata with parametric rate- and transition functions. We presented methods to perform satisfaction verification and parameter synthesis on these models. Our approach consists of two steps. First, we transform the pMA to an approximate discretized pMA. Second, we analyze the approximate behaviour of the pMA by applying existing parameter lifting techniques.
The experimental evaluation indicated that both the predominant source of error and computational demand arise from the discretization method.

\para{Future work.}
Given our findings above, future research could investigate approaches which do not use the discretization method. This could for instance require the analysis of continuous-time stochastic games without discretization in order to perform parameter lifting. Alternatively, one could opt for an approach using uniformization, such as in~\cite{DBLP:conf/atva/ButkovaHHK15}, instead. However, this would most likely require many costly calls on the SG due to the prophetic scheduler.

A better choice of heuristics may optimize the total decided area for the parameter synthesis problem. That is, one may investigate how to use previous information to decide when to reduce the discretization error. Less conservative conditions on when to decrease the discretization error could potentially save expensive model checking queries on the discretized MA. However, it may lead to, potentially, checking regions with a lower discretization error than necessary. 

We believe our method can also be extended to time-interval bounded queries. Here, the main difficulty lies in finding an appropriate approximate discretization such that lower and upper bounds are guaranteed.

\para{Data availability.}
The implementation and benchmark files are publicly available in the artifact at \href{https://doi.org/10.5281/zenodo.19661781}{doi.org/10.5281/zenodo.19661781}.

\bibliographystyle{splncs04}
\bibliography{refs}

\ifextended
    \clearpage
    \appendix
    \section{Parameter synthesis figures}\label{appendix:spacePartitionFigures}
We provide the parameter synthesis results for all considered models.
Visualizations for the pMA models under both the demonic and robust satisfaction relation are given in Figures~\ref{fig:synStreaming}, \ref{fig:synReaderWriter}, \ref{fig:synJobs}, \ref{fig:synManufacturing}, \ref{fig:synPolling}.
Note that the visualization for \textit{stream2-4} in Figure~\ref{fig:synStreaming} was also already given in the main text in Figure~\ref{fig:streamingSpacePartitioning}.
The \textit{erlang} model is not depicted as it only has one parameter.

For pCTMCs, no non-determinism exists and the robust and demonic satisfaction relations coincide.
The visualizations for the pCTMCs are given in Figures~\ref{fig:synPollingCTMC}, \ref{fig:synTandem}, \ref{fig:synKanban}.

\begin{figure}[!htb]
\centering
\begin{subfigure}[t]{0.4\textwidth}
\includegraphics[scale=0.4]{figures/synthesisRegion_stream2-4SynthesisType.DEMONIC.png}
\caption{Demonic}
\end{subfigure}%
~\hspace{2cm}
\begin{subfigure}[t]{0.4\textwidth}
\includegraphics[scale=0.4]{figures/synthesisRegion_stream2-4SynthesisType.ROBUST.png}
\caption{Robust}
\end{subfigure}
    \caption{Parameter synthesis on \textit{stream2-4} under different satisfaction relations.}
    \label{fig:synStreaming}
\end{figure}

\begin{figure}[!htb]
\centering
\begin{subfigure}[t]{0.4\textwidth}
\includegraphics[scale=0.4]{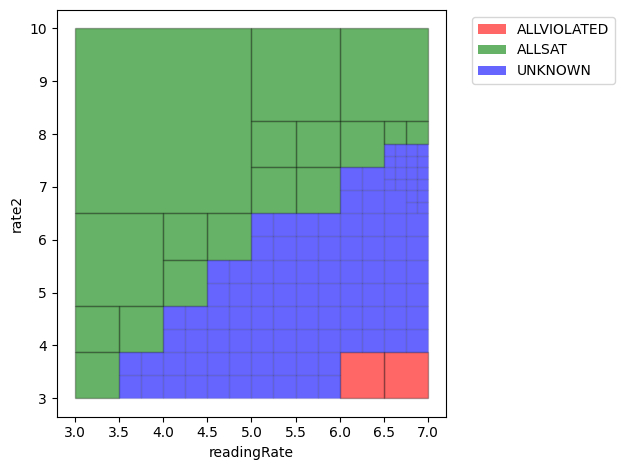}
\caption{Demonic}
\end{subfigure}%
~\hspace{2cm}
\begin{subfigure}[t]{0.4\textwidth}
\includegraphics[scale=0.4]{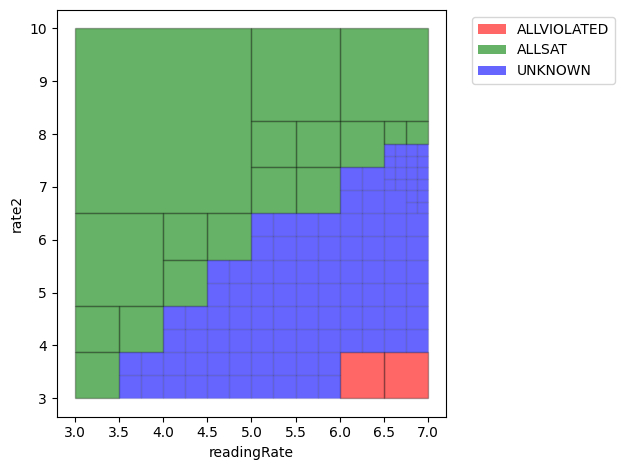}
\caption{Robust}
\end{subfigure}
    \caption{Parameter synt. on \textit{readers-writers5} under different satisfaction relations.}
    \label{fig:synReaderWriter}
\end{figure}

\begin{figure}[!htb]
\centering
\begin{subfigure}[t]{0.4\textwidth}
\includegraphics[scale=0.4]{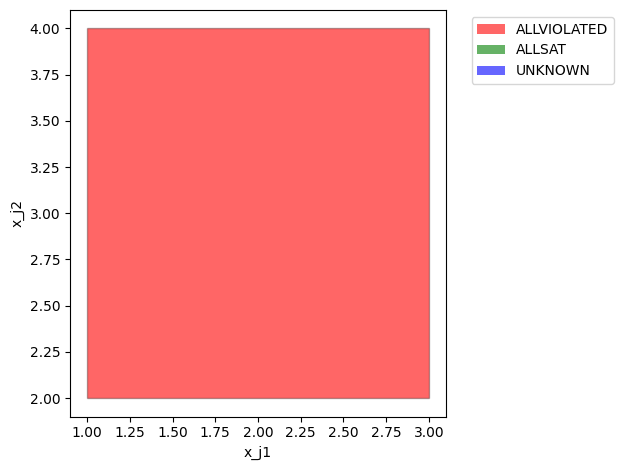}
\caption{Demonic}
\end{subfigure}%
~\hspace{2cm}
\begin{subfigure}[t]{0.4\textwidth}
\includegraphics[scale=0.4]{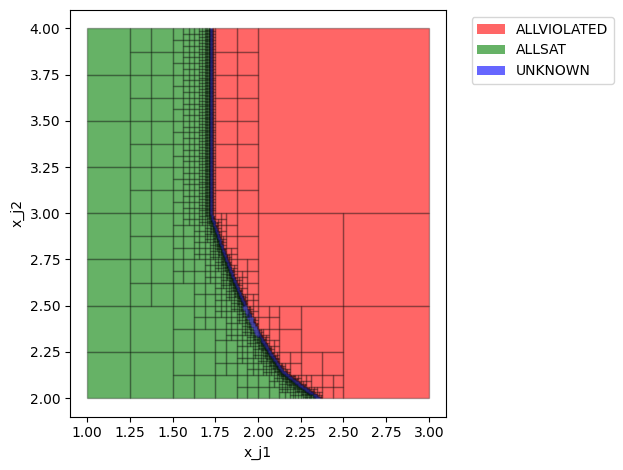}
\caption{Robust}
\end{subfigure}
    \caption{Parameter synthesis on \textit{jobs5-2} under different satisfaction relations.}
    \label{fig:synJobs}
\end{figure}

\begin{figure}[!htb]
\centering
\begin{subfigure}[t]{0.4\textwidth}
\includegraphics[scale=0.4]{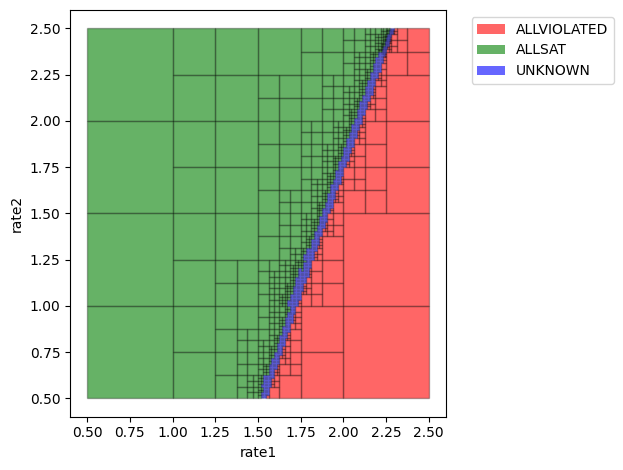}
\caption{Demonic}
\end{subfigure}%
~\hspace{2cm}
\begin{subfigure}[t]{0.4\textwidth}
\includegraphics[scale=0.4]{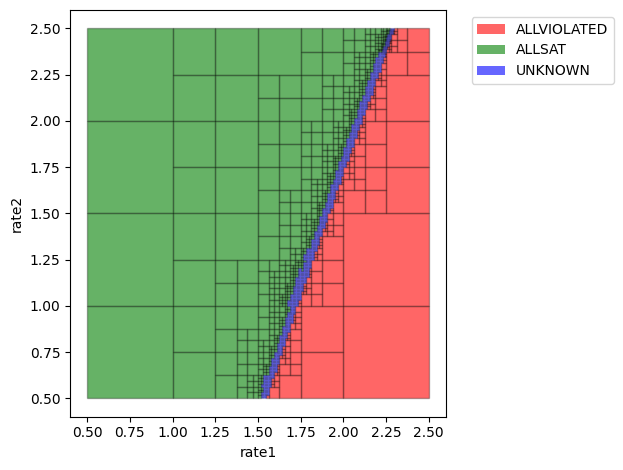}
\caption{Robust}
\end{subfigure}
    \caption{Paramet. synt. on \textit{flexible-manufacturing3} under diff. satisf. relations.}
    \label{fig:synManufacturing}
\end{figure}

\begin{figure}[!htb]
\centering
\begin{subfigure}[t]{0.4\textwidth}
\includegraphics[scale=0.4]{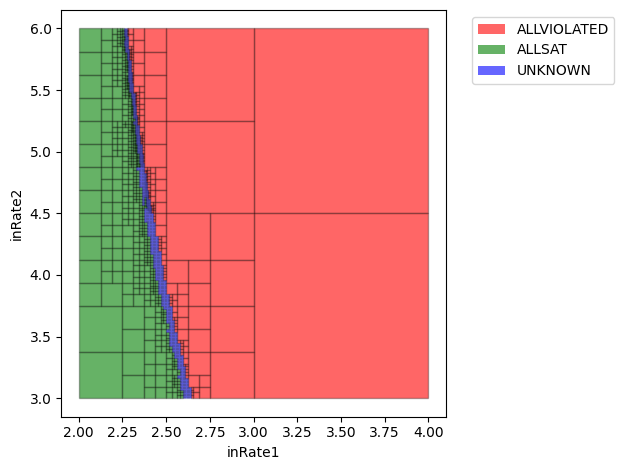}
\caption{Demonic}
\end{subfigure}%
~\hspace{2cm}
\begin{subfigure}[t]{0.4\textwidth}
\includegraphics[scale=0.4]{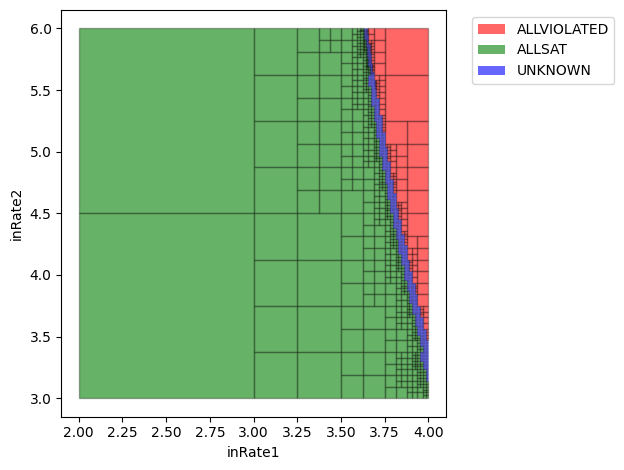}
\caption{Robust}
\end{subfigure}
    \caption{Parameter synthesis on \textit{polling} under different satisfaction relations.}
    \label{fig:synPolling}
\end{figure}

\begin{figure}[!htb]
\centering
\includegraphics[scale=0.4]{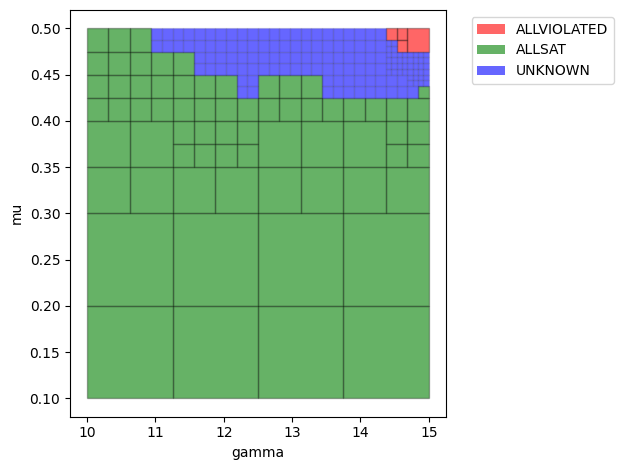}
    \caption{Parameter synthesis on \textit{polling9} pCTMC.}
    \label{fig:synPollingCTMC}
\end{figure}

\begin{figure}
\begin{minipage}[c]{0.4\textwidth}
\includegraphics[scale=0.4]{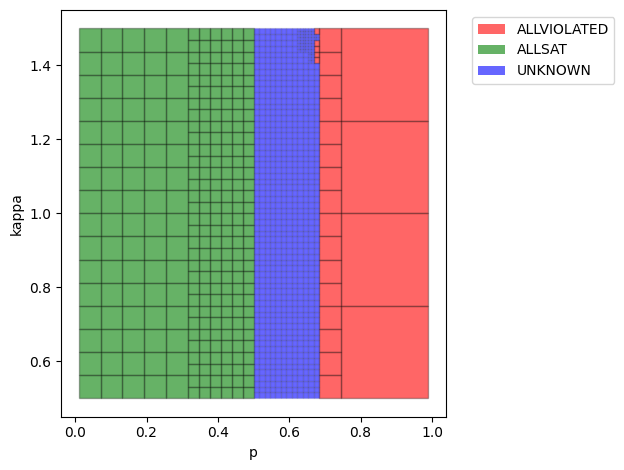}
    \caption{Parameter synthesis on \textit{tandem} pCTMC.}
    \label{fig:synTandem}
\end{minipage}
\hfill
\begin{minipage}[c]{0.4\textwidth}
\includegraphics[scale=0.4]{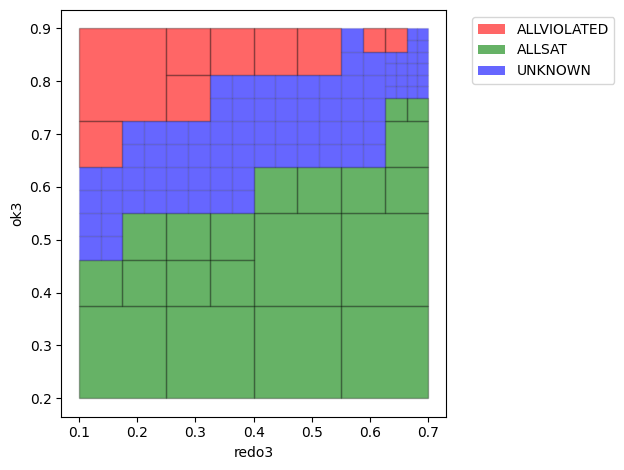}
    \caption{Parameter synthesis on \textit{kanban} pCTMC.}
    \label{fig:synKanban}
\end{minipage}%
\end{figure}

\FloatBarrier

\section{Tables}\label{appendix:tables}

This section presents all measurements obtained for the satisfaction verification  and the time breakdown.

Table~\ref{table:verificationResults} provides the satisfaction verification results for all considered models. It is the full version of Table~\ref{tab:verificationResultsExtract} from the main text.

Table~\ref{table:regionRuntime} gives a time-breakdown of checking a single region.
The table is the full version of Table~\ref{table:regionRuntimeExtract} from the main text.
The modelchecking times in the table are a sum of all possible combinations of: $Pr_{p,n}[\dMAlow,\diamond^{\leq t/\delta}G]$ with ${p,n\in\{\max,\min\}}$. Using this table, we can confirm some observations made in previous sections. The building time on pMA benchmarks increases significantly when we enlarge the state space. Moreover, with respect to modelchecking runtimes, we observe a close to inverse linear relationship with the discretization error $\epsilon_\delta$ and a positive correlation with the maximum value in the region.
We observe that the building time is independent of these factors.

This correlation between maximum value and the model checking times can be attributed to the fact that, in all benchmarks, a larger value in the region results in a higher maximum sojourn rate. Consequently, a smaller discretization step $\delta$ is required to ensure the same error bound $\epsilon_\delta$. Naturally, decreasing $\epsilon_\delta$ requires a smaller discretization step as well, thus increasing modelchecking times. That is, these relations between discretization- step and error can be derived from Lemma~\ref{lemma:parametricMADiscretization} and the increase of model checking times is immediately derived from the fact that a smaller discretization step yields a model checking query on the stochastic game with a larger stepbound. Thereby, this call is inherently more expensive. 

When increasing the number of parameters in the model, as showcased for the jobs benchmark, we see a slight increase in model checking times as well which can be attributed to the introduction of additional nondeterministic choices in the parameter-free stochastic game. More significantly, increasing the number of parameters increases the dimension of the region $\Region$, such that every indecisive region will be divided in $2^{|\mathcal{X}|}$ subregions. That is, the number of additional subregions introduced compared to the original model is exponential in terms of the number of auxiliary variables.

\begin{table}[!htb]
\caption{Satisfaction verification results. \textbf{S}atisfied/\textbf{V}iolated property.}
\label{table:verificationResults}
\centering
\resizebox{\textwidth}{!}{%
    \setlength{\tabcolsep}{12pt}
\begin{tabular}{c|c|c|r|c}
Name & Region & Sat. relation & Time(s) & S/V \\
\hline

erlang & $R\in[7,10]$ & Demonic & 0.03 & 0/1 \\
erlang & $R\in[7,10]$ & Feasibility & 0.13 & 1/0 \\
erlang & $R\in[7,10]$ & Robust & 0.02 & 0/1 \\
erlang & $R\in[7,10]$ & Angelic-aid & 0.04 & 0/1 \\
flexible-manufacturing3 & $rate1\in[0.5,2.5], rate2\in[0.5,2.5]$ & Demonic & 0.47 & 0/1 \\
flexible-manufacturing3 & $rate1\in[0.5,2.5], rate2\in[0.5,2.5]$ & Feasibility & 0.97 & 1/0 \\
flexible-manufacturing3 & $rate1\in[0.5,2.5], rate2\in[0.5,2.5]$ & Robust & 0.44 & 0/1 \\
flexible-manufacturing3 & $rate1\in[0.5,2.5], rate2\in[0.5,2.5]$ & Angelic-aid & 0.97 & 1/0 \\
jobs5-2 & $x_j1\in[1,3], x_j2\in[2,4]$ & Demonic & 0.31 & 0/1 \\
jobs5-2 & $x_j1\in[1,3], x_j2\in[2,4]$ & Feasibility & 0.46 & 1/0 \\
jobs5-2 & $x_j1\in[1,3], x_j2\in[2,4]$ & Robust & 0.31 & 0/1 \\
jobs5-2 & $x_j1\in[1,3], x_j2\in[2,4]$ & Angelic-aid & 0.29 & 0/1 \\
kanban & $redo3\in[0.1,0.7], ok3\in[0.2,0.9]$ & Robust & 17.23 & 0/1 \\
kanban & $redo3\in[0.1,0.7], ok3\in[0.2,0.9]$ & Angelic-aid & 11.85 & 1/0 \\
polling9 & $gamma\in[10,15], mu\in[0.1,0.5]$ & Robust & 50.06 & 0/1 \\
polling9 & $gamma\in[10,15], mu\in[0.1,0.5]$ & Angelic-aid & 5.37 & 1/0 \\
polling & $inRate1\in[2,4], inRate2\in[3,6]$ & Demonic & 5.25 & 0/1 \\
polling & $inRate1\in[2,4], inRate2\in[3,6]$ & Feasibility & 5.23 & 1/0 \\
polling & $inRate1\in[2,4], inRate2\in[3,6]$ & Robust & 5.42 & 0/1 \\
polling & $inRate1\in[2,4], inRate2\in[3,6]$ & Angelic-aid & 6.88 & 1/0 \\
readers-writers5 & $readingRate\in[3,7], rate2\in[3,10]$ & Demonic & 36.00 & 0/1 \\
readers-writers5 & $readingRate\in[3,7], rate2\in[3,10] $& Feasibility & 0.15 & 1/0 \\
readers-writers5 & $readingRate\in[3,7], rate2\in[3,10]$ & Robust & 186.91 & 0/1 \\
readers-writers5 & $readingRate\in[3,7], rate2\in[3,10]$ & Angelic-aid & 0.15 & 1/0 \\
stream2-4 & $inRate\in[1,7], processingRate\in[1,7]$ & Demonic & 0.10 & 0/1 \\
stream2-4 & $inRate\in[1,7], processingRate\in[1,7]$ & Feasibility & 0.10 & 1/0 \\
stream2-4 & $inRate\in[1,7], processingRate\in[1,7]$ & Robust & 0.15 & 0/1 \\
stream2-4 & $inRate\in[1,7], processingRate\in[1,7]$ & Angelic-aid & 0.25 & 1/0 \\
tandem & $p\in[0.01,0.99], kappa\in[0.5,1.5]$ & Robust & 1.42 & 0/1 \\
tandem & $p\in[0.01,0.99], kappa\in[0.5,1.5]$ & Angelic-aid & 5.55 & 1/0 \\
\end{tabular}}
\end{table}

\begin{table}[!htb]
\caption{Timings of different benchmarks, varying in region and discretization error.}
\label{table:regionRuntime}
\centering
\resizebox{\textwidth}{!}{
\begin{tabular}{c|c|l|r|r}
Benchmark & Region & $\epsilon_\delta$ & Build Time (s) & MC Time (s) \\
\hline
erlang & $R\in[1,10]$ & 0.001 & 0.024 & 3.388 \\
erlang & $R\in[1,10]$ & 0.0001 & 0.024 & 32.274 \\
erlang & $R\in[1,4]$ & 0.001 & 0.024 & 0.534 \\
erlang & $R\in[1,4]$ & 0.0001 & 0.023 & 5.190 \\
erlang & $R\in[7,10]$ & 0.001 & 0.024 & 2.932 \\
erlang & $R\in[4,5]$ & 0.001 & 0.023 & 0.819 \\
erlang & $R\in[4,5]$ & 0.0001 & 0.022 & 8.117 \\
erlang & $R\in[7,10]$ & 0.0001 & 0.024 & 31.025 \\
erlang & $R\in[4,8]$ & 0.001 & 0.024 & 2.123 \\
erlang & $R\in[4,8]$ & 0.0001 & 0.024 & 20.589 \\
erlang & $R\in[1,2]$ & 0.001 & 0.023 & 0.121 \\
erlang & $R\in[1,2]$ & 0.0001 & 0.023 & 1.359 \\
erlang & $R\in[19,20]$ & 0.001 & 0.024 & 12.517 \\
erlang & $R\in[19,20]$ & 0.0001 & 0.023 & 124.381 \\
erlang & $R\in[1,20]$ & 0.001 & 0.023 & 12.370 \\
erlang & $R\in[1,20]$ & 0.0001 & 0.015 & 122.813 \\
flexible-manufacturing3 & $rate1\in[0.5,2.5]$, $rate2\in[0.5,2.5]$ & 0.001 & 0.246 & 1.303 \\
flexible-manufacturing3 & $rate1\in[0.5,2.5]$, $rate2\in[0.5,2.5]$ & 0.0001 & 0.245 & 11.357 \\
jobs5-2-2params & $x_j1\in[1,3]$, $x_j2\in[2,4]$ & 0.001 & 0.245 & 0.562 \\
jobs5-2-2params & $x_j1\in[1,3]$, $x_j2\in[2,4]$ & 0.0001 & 0.303 & 5.374 \\
jobs5-2-3params & $x_j1\in[1,3]$, $x_j2\in[2,4]$, $x_j3\in[1,3]$ & 0.001 & 0.254 & 0.654 \\
jobs5-2-3params & $x_j1\in[1,3]$, $x_j2\in[2,4]$, $x_j3\in[1,3]$ & 0.0001 & 0.297 & 6.427 \\
jobs5-2-4params & $x_j1\in[1,3]$, $x_j2\in[2,4]$, $x_j3\in[1,3]$, $x_j4\in[2,4]$ & 0.001 & 0.246 & 1.166 \\
jobs5-2-4params & $x_j1\in[1,3]$, $x_j2\in[2,4]$, $x_j3\in[1,3]$, $x_j4\in[2,4]$ & 0.0001 & 0.273 & 10.098 \\
jobs5-2-5params & $x_j1\in[1,3]$, $x_j2\in[2,4]$, $x_j3\in[1,3]$, $x_j4\in[2,4]$, $x_j5\in[2,4]$ & 0.001 & 0.246 & 1.335 \\
jobs5-2-5params & $x_j1\in[1,3]$, $x_j2\in[2,4]$, $x_j3\in[1,3]$, $x_j4\in[2,4]$, $x_j5\in[2,4]$ & 0.0001 & 0.245 & 12.518 \\
kanban & $redo3\in[0.1,0.7]$, $ok3\in[0.2,0.9]$ & 0.001 & 1.007 & 52.770 \\
kanban & $redo3\in[0.1,0.7]$, $ok3\in[0.2,0.9]$ & 0.0001 & 0.962 & 524.605 \\
polling9 & $gamma\in[10,15]$, $mu\in[0.1,0.5]$ & 0.001 & 1.372 & 26.216 \\
polling9 & $gamma\in[10,15]$, $mu\in[0.1,0.5]$ & 0.0001 & 1.300 & 253.763 \\
polling & $inRate1\in[2,4]$, $inRate2\in[3,6]$ & 0.001 & 5.259 & 8.109 \\
polling & $inRate1\in[2,4]$, $inRate2\in[3,6]$ & 0.0001 & 5.309 & 78.210 \\
readers-writers5 & $readingRate\in[3,7]$, $rate2\in[3,10]$ & 0.001 & 0.086 & 72.818 \\
readers-writers5 & $readingRate\in[3,7]$, $rate2\in[3,10]$ & 0.0001 & 0.087 & 735.668 \\
stream2-4 & $inRate\in[1,7]$, $processingRate\in[1,7]$ & 0.001 & 0.089 & 8.111 \\
stream2-4 & $inRate\in[1,7]$, $processingRate\in[1,7]$ & 0.0001 & 0.094 & 79.102 \\
tandem & $p\in[0.01,0.99]$, $kappa\in[0.5,1.5]$ & 0.001 & 0.103 & 6.817 \\
tandem & $p\in[0.01,0.99]$, $kappa\in[0.5,1.5]$ & 0.0001 & 0.102 & 69.817 \\
\end{tabular}}
\end{table}

\FloatBarrier
\section{Probabilistic State collapse algorithm}\label{appendix:algorithm}
After discretizing the pMA, the remaining $\PS$ states must be collapsed into Markovian states in order to compute the step-bounded reachability probability.
For this collapse, we must be able to distinguish which states originally belonged to $\MS$ and $\PS$. This is done by adding a reward structure $R$ to the discretization such that:
\begin{align*}
R(s) = \begin{cases}
    1, \textit{ if }s\in MS\\
    0, \textit{ if }s\in PS
\end{cases}
\end{align*}
\begin{definition}[Binary reward MDP]\label{def:binaryMDP}
    We call an MDP $M = \{S,A,s_0, \rightarrow, R\}$ with reward structure $R$ a \emph{binary reward MDP} if $R(s) \in \{0,1\}$ for all $s \in S$.
    Moreover, we partition $S=S_1\sqcup S_0$ into sets of states with reward $1$ and $0$, respectively.
\end{definition}
It is easy to see that cost-bounded reachability with respect to $R$ in this adapted discretized pMA yields the same result as the adapted VI-scheme. Algorithm~\ref{alg:BinaryRewardMDP} ensures that, given a binary reward MDP without $0$-reward loops, it returns an MDP for which the step-bounded reachability is equivalent to the cost-bounded reachability in the former. In particular, for our use case, it means that Algorithm~\ref{alg:BinaryRewardMDP} collapses the remaining zero-reward states of the discretized model into states with reward one.
The algorithm follows state-elimination ideas such as in~\cite{10.1007/978-3-319-47677-3_6}.
Note that our discretized pMA is guaranteed to be zero-loop free due to preprocessing as covered in Section~\ref{ss:CollapsePS}. Therefore, the following result is obtained:

\begin{lemma}
    Given a discretized pMA $\dMA$ with the reward structure defined above, $G\subseteq \MS$ and $N\in \mathbb{N}$. Let $\dMA_1$ be the model returned from Algorithm~\ref{alg:BinaryRewardMDP}. Then we have the following:
    \begin{align*}
        Pmax[\dMA, \diamond^{c_R\leq N} G] = Pmax[\dMA_1, \diamond^{\leq N} G] 
    \end{align*}
    Moreover, the same holds for $Pmin[\cdot]$.
\end{lemma}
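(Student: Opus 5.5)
We do not see Algorithm~\ref{alg:BinaryRewardMDP} itself, so we take it to eliminate reward-$0$ states one at a time, in the style of~\cite{10.1007/978-3-319-47677-3_6}. When a reward-$0$ state $u$ is removed, every transition $s \xrightarrow{\alpha} u$ is redirected: $s$ gets one new action for each action $\beta$ of $u$. The new action moves to $u$'s successors with probability $P(s,\alpha,u)\,P(u,\beta,\cdot)$, added to the remaining mass of $\alpha$. The initial state, if it has reward $0$, is handled by a fresh reward-$1$ initial copy (or analogously).

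The proof is an induction on the number of eliminated reward-$0$ states. The invariant is that cost-bounded reachability values are preserved for every remaining state $s$ and every budget $n\le N$, for both $\max$ and $\min$. Once all reward-$0$ states are gone, every step costs exactly $1$. Then $\diamond^{c_R\le N}G$ and $\diamond^{\le N}G$ coincide pathwise, which gives the claimed equality for $\dMA_1$.

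For one elimination step, we characterise cost-bounded reachability by its Bellman fixpoint. For $s\notin G$:
\begin{align*}
V_n(s) = \max_{\alpha}\sum_{s'} P(s,\alpha,s')\,V_{n-R(s)}(s'),
\end{align*}
with $V_n(s)=1$ for $s\in G$, and $V_n(s)=0$ whenever $n<0$ or $n=0$ with $s\notin G$. Because there are no zero-reward loops, the zero-reward part is acyclic, so this system has a unique solution. It can be computed by induction on $n$, and for fixed $n$ by backward induction along the acyclic order of $S_0$.

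We then show that the values after eliminating $u$ satisfy the same equations. Take a state $s$ with $s\xrightarrow{\alpha}u$.
\begin{itemize}
\item The term $P(s,\alpha,u)V_m(u)$ with $m=n-R(s)$ expands to $P(s,\alpha,u)\max_\beta\sum_{s''}P(u,\beta,s'')V_m(s'')$. This uses $R(u)=0$, so the budget does not change when passing through $u$, and $u\notin G$, which holds since $G\subseteq\MS$ and $u$ has reward $0$.
\item The new action set of $s$ ranges over pairs $(\alpha,\beta)$. So the outer maximum over $\alpha$ of the inner maximum over $\beta$ equals a single maximum over the product actions.
\item The key subtlety is that $\max$ does not distribute over the convex combination in general. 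It does here because the choice $\beta$ enters only through the term for $u$, multiplied by the nonnegative weight $P(s,\alpha,u)$. Hence $\max_\alpha\bigl(c_\alpha + w_\alpha \max_\beta x_\beta\bigr) = \max_{\alpha,\beta}\bigl(c_\alpha + w_\alpha x_\beta\bigr)$ for $w_\alpha\ge 0$.
\item The same argument works with $\min$.
\end{itemize}
By uniqueness of the fixpoint, the values coincide on all remaining states.

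The main obstacle is justifying that memoryless, budget-dependent schedulers suffice on both sides, so that the Bellman characterisation captures $Pmax$ and $Pmin$. A second point is the self-loop case $u\xrightarrow{\beta}u$; it is excluded by the no-zero-reward-loop assumption, which Section~\ref{ss:CollapsePS} guarantees. For the first point we would cite standard finite-horizon MDP theory: the reward-augmented product with budget counter $\{0,\dots,N\}$ is a finite MDP with an acyclic-per-level structure. On that product, reachability optima are attained by deterministic positional schedulers and are the unique fixpoint above.

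Parameters cause no additional difficulty. The statement is read per valuation $v\in\Region$, elimination commutes with instantiation, and graph preservation keeps the zero-reward structure valuation-independent.
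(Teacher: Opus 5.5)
Your argument is correct in substance. Its outer structure matches the paper's sketch: prove the equality for each valuation $v\in\Region$, then read the parametric statement pointwise. The difference is in the per-valuation step, which the paper never proves. It relies on the \emph{Ensure} clause of Algorithm~\ref{alg:BinaryRewardMDP} (``it is easy to see'') and on the state-elimination literature. You actually prove it: you characterise cost-bounded values by a Bellman system, whose solution is unique because there are no zero-reward loops. You then show elimination preserves that solution, because $\max$/$\min$ distribute over product actions when the weights are nonnegative. This makes explicit where each hypothesis enters:
\begin{itemize}
\item acyclicity of $S_0$ gives uniqueness of the fixpoint and termination;
\item $G\subseteq\MS$ ensures eliminated states are never goal states;
\item finite-horizon MDP theory ensures deterministic budget-positional schedulers suffice.
\end{itemize}
The paper's sketch gains only brevity, and leaves the algorithm's correctness unproved.

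Three points need tightening.

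\emph{Terminal condition.} Your rule ``$V_n(s)=0$ for $n=0$, $s\notin G$'' is wrong for reward-$0$ states. Suppose $s\in\MS$ moves under $\alpha$ to $u\in\PS$, and $u$ surely moves to $g\in G$. The path $s\,u\,g$ costs $1$, so $V_1(s)\ge P(s,\alpha,u)$. Your rule instead forces $V_0(u)=0$, so the invariant fails for the system as written. Your expansion step in fact already uses the recursion at $u$ with $m=0$. The fix is: make only $n<0$ terminal, set $V_n(g)=1$ for $g\in G$ and $n\ge 0$, and use the recursion everywhere else.

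\emph{Shape of the algorithm.} Algorithm~\ref{alg:BinaryRewardMDP} does not eliminate one state at a time. For each state--action pair it replaces \emph{all} zero-reward successors at once, using the Cartesian product of their actions. It repeats this until no reward-$0$ state is reachable. Your identity extends directly, since the inner optimisations range over independent index sets:
\[ \max_\alpha\Bigl(c_\alpha+\sum_i w_{\alpha,i}\max_{\beta_i}x_{i,\beta_i}\Bigr)=\max_{\alpha,\beta_1,\dots,\beta_k}\Bigl(c_\alpha+\sum_i w_{\alpha,i}x_{i,\beta_i}\Bigr)\quad (w_{\alpha,i}\ge 0), \]
and likewise for $\min$. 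You should also state that a round creates no zero-reward cycle: any such cycle would lift to one in the previous model. This is what licenses uniqueness again after each round.

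\emph{Initial state.} The algorithm requires $s_0\in S_1$. Your ``fresh reward-$1$ initial copy'' would shift the budget by one. Either drop that remark or use the bound $N+1$.
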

\begin{proof}[Sketch]
    For any valuation $v\in R$, the equality holds due to the reasoning above. The inequality then holds for the parametric models as well. \qed
\end{proof}

\begin{algorithm}[tb]
\caption{Converting binary reward MDP to MDP}\label{alg:BinaryRewardMDP}
\begin{algorithmic}
\Require Binary reward MDP $M=(S_0\sqcup S_1, s_0, Act, \rightarrow, R)$ with $s_0\in S_1$ and $M$ free of zero-reward loops.
\Ensure $Pr^{M}[\diamond^{c\in I}G] = Pr^{M'}[\diamond^{\in I}G]$ for any $G\subseteq S_1, I=[t_0,t_1]\subset \mathbb{R}_{\geq 0}$, \\\hspace{1.2cm}$\forall (s,s')\in S_1\times S_0: Paths^{M'}_{s}(s') = \emptyset$

\State $M' \gets \emptyset$
\Repeat
\If{$M' = \emptyset$}
\State $M'' \gets M$
\Else
\State $M'' = M'$
\EndIf
\For {$s_1\in S, a\in Act(s_1)$}
\State \Comment Gather all $\PS$ successor states of $s_1$ and remove transitions to them.
\State $ReachableZeroRewardStates \gets \{s\in S_0~|~s\in supp(P(s_1,a,\cdot)) \}$
\For {$s\in ReachableZeroRewardStates$}
\State $P'(s_1,a,s) \gets 0$
\EndFor
\State \Comment Cartesian product of outgoing actions for all $\PS$
	\State $ActionCombs \gets \times_{s'_i\in~ReachableZeroRewardStates} \{(s'_i, a'_i) |~a'_i\in~Acts(s'_i)\}$
\State \Comment Set transitions from $s_i$ to successor states of $\PS$.
\For {$Acts \in ActionCombs, t \in S$}
	\State $P'(s_1, (s_1, a) \times Acts, t) \gets P(s_1,a,t)~+~\sum_{(s'_i, a'_i) \in Acts}P(s_1,a,s'_i)P(s'_i,a'_i,t)$
\EndFor
\If {$ReachableZeroRewardStates = \emptyset$}
\State  $P'(s_1, (s_1, a), t) \gets P(s_1,a,t)$
\EndIf
\EndFor
\State \Comment Update model.
\State RemoveAnyUnreachableStates()
\State $M' = (S_1\sqcup S_0, s_0, Act, P')$
\Until{$M''\neq M'$}\\
\Return $M'$
\end{algorithmic}
\end{algorithm}

\FloatBarrier

\section{Approximate discretization}\label{appendix:approximateDiscretization}
In this section, we will prove Theorem~\ref{thm:approxDiscretizationError} in two steps. Lemma~\ref{lemma:pMAapproxApproximations} guarantees lower and upper bounds with respect to the true reachability probability and Lemma~\ref{lemma:approxDiscretizationBounds} guarantees that this additional error tends to zero as the discretization step goes to zero.
\begin{lemma}\label{lemma:pMAapproxApproximations}
	Given pMA $\MA{=}(S,A, s_0,\rightarrow, \Rightarrow)$ and discretization step $\delta{>}0$, we~have:
	\[
		Pmax[\dMAlow,\diamond^{\leq t/\delta}G] \leq Pmax[\MA,\diamond^{\leq t}G] \leq Pmax[\dMAhigh,\diamond^{\leq t/\delta}G] + \epsilon_\delta,\] where $\epsilon_\delta$ is the discretization error as per Lemma~\ref{lemma:parametricMADiscretization}. 
    Moreover, the same result holds for $Pmin[\cdot]$. 
\end{lemma}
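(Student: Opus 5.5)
The plan is to sandwich the approximate discretizations around the exact discretization $\dMA$ and then invoke Lemma~\ref{lemma:parametricMADiscretization}. Concretely, it suffices to show, for each valuation $v\in\Region$,
\[
Pmax[\dMAlow[v],\diamond^{\leq t/\delta}G]\leq Pmax[\dMA[v],\diamond^{\leq t/\delta}G]\leq Pmax[\dMAhigh[v],\diamond^{\leq t/\delta}G].
\]
Combined with the lower and upper bounds of Lemma~\ref{lemma:parametricMADiscretization} (taken pointwise in $v$, with $\epsilon_\delta$ already uniform over $\Region$), this gives the claim. The $Pmin$ case follows the same argument, since all steps below are scheduler-independent.

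The first step is a per-state comparison of transition functions. Fix $v$ and a Markovian state $s$, and write $x=\lambda[v](s)\delta$ (the discretization clearly intends the exponent $\lambda(s)\delta$). Let $T_N(x)=\sum_{k=0}^N(-x)^k/k!$. By Equation~\eqref{eq:expApprox}, $T_{2K+1}(x)\leq e^{-x}\leq T_{2K}(x)$. Hence $\dMAlow=\dMA_{2K}$ has a larger self-loop weight and smaller leaving mass $1-T_{2K}(x)\leq 1-e^{-x}$. Conversely, $\dMAhigh$ has leaving mass $1-T_{2K+1}(x)\geq 1-e^{-x}$. The jump distribution $p(s,\cdot)$ is the same in all three models, and probabilistic states are untouched.

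One side issue must be handled here: $T_{2K+1}(x)$ could be negative for large $x$, which would make $\dMAhigh$ ill-defined. I would either assume $\delta$ is small enough, which is implicit since $\epsilon_\delta$ must be small anyway, or truncate the value at $0$. Neither changes the monotonicity argument.

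The second step, and the main one, is a monotonicity lemma. Consider two MDPs that agree everywhere except that in each Markovian state $s$ the self-loop/leave split is $(1-q_s)+q_s p(s,\cdot)$ versus $(1-q'_s)+q'_s p(s,\cdot)$, with $q_s\leq q'_s$. Then step-bounded reachability of $G$ is smaller in the first model, for every horizon $n$ and for both $Pmax$ and $Pmin$.

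I would prove this by induction on $n$ via the Bellman step-bounded value iteration $V_n$. The induction must also carry a second claim: $V_n$ is nondecreasing in $n$, because values of bounded reachability grow with the bound. In the inductive step, the value at $s\notin G$ is
\[
(1-q_s)V_{n-1}(s)+q_s\sum_{s'}p(s,s')V_{n-1}(s').
\]
Replacing $V_{n-1}$ by the larger $V'_{n-1}$ is monotone. Increasing $q_s$ is monotone provided $\sum_{s'}p(s,s')V'_{n-1}(s')\geq V'_{n-1}(s)$.

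This inequality is the obstacle, and it does not hold for arbitrary value functions. What I would try first is to note that $\sum_{s'}p(s,s')V_{n-1}(s')$ is at least the probability of reaching $G$ when $s$ is left immediately, and compare this with staying. Staying one step in $s$ and then leaving can be simulated by leaving now and spending one fewer step. Formally, this means showing $V_{n-1}(s)\leq \sum p(s,s')V_{n-1}(s')$ by an inner induction. The argument uses that $V_{n-1}(s)$ is a convex combination of $V_{n-2}(s)$ and that sum, and that $V$ is monotone in $n$.

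Probabilistic states require the adapted value-iteration scheme. After the collapse of Section~\ref{ss:CollapsePS}, all states are Markovian, so the argument applies directly. With the lemma in hand, applying it to $(\dMAlow,\dMA)$ and to $(\dMA,\dMAhigh)$ yields the sandwich, which completes the proof.
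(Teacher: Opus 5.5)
Your proposal is correct and follows the paper's proof. The paper likewise obtains $Pmax[\dMAlow,\diamond^{\leq t/\delta}G]\leq Pmax[\dMA,\diamond^{\leq t/\delta}G]\leq Pmax[\dMAhigh,\diamond^{\leq t/\delta}G]$ from the Taylor bounds on the self-loop probability $e^{-\lambda(s)\delta}$, and then chains this with Lemma~\ref{lemma:parametricMADiscretization}. The only difference is rigor: the paper simply asserts that higher self-loop probabilities give smaller time-bounded reachability, whereas your induction proves it, using the inner claim $V_k(s)\le\sum_{s'}p(s,s')V_k(s')$, which also holds with $\max$ or $\min$ over the actions introduced by the collapse.
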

\begin{proof}
    For $\dMAlow, \dMAhigh$, the exponential function is over- and under-approximated, as per Equation~\ref{eq:expApprox}. This exponential is the probability of a self-loop in the approximate discretization of $\MA$. Therefore, the probability mass of taking a self loop is, respectively, over- and under-approximated. As higher self-loop probabilities lead to smaller probabilities for time-bounded reachability, we get the following chain of inequalities.
    
    \begin{align*}
        &Pmax[\dMAlow, \diamond^{\leq t/\delta}G] \leq Pmax[\dMA, \diamond^{\leq t/\delta}G] \\
        &\leq_{\text{Lemma}~\ref{lemma:parametricMADiscretization}}Pmax[\MA, \diamond^{\leq t}G]\\
        &\leq_{\text{Lemma}~\ref{lemma:parametricMADiscretization}}Pmax[\dMA,\diamond^{\leq t/\delta}G] +\epsilon_\delta\leq Pmax[\dMAhigh, \diamond^{\leq t/\delta}G] + \epsilon_\delta
    \end{align*}
    The case $Pmin[\cdot]$ is analogous; thus we conclude the proof.\qed
\end{proof}

\begin{lemma}\label{lemma:approxDiscretizationBounds}
    Given pMA $\MA=(S,A,s_0,\rightarrow, \Rightarrow)$, discretization step $\delta>0$ and $N>0$. Let $\dMAlow = \dMA_{2N} , \dMAhigh = \dMA_{2N+1}$. We have that:
    \begin{align*}
	    Pmax[\dMA,\diamond^{\leq t/\delta}G] - \epsilon^{2N}_{t/\delta} &\leq Pmax[\dMAlow,\diamond^{\leq t/\delta}G]\\
	    Pmax[\dMAhigh,\diamond^{\leq t/\delta}G] &\leq Pmax[\dMA,\diamond^{\leq t/\delta}G] + \epsilon^{2N+1}_{t/\delta},
    \end{align*}
    where $\epsilon^M_{t/\delta} = (1+\frac{(\lambda\delta)^M}{M!})^{t/\delta}-1$ with $\lambda =\sup_{v\in\Region}\lambda[v]$.
    Moreover, the same result holds for $Pmin(\cdot)$. 
\end{lemma}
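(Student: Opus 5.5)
We fix a valuation $v\in\Region$ and a scheduler, and compare the step-bounded reachability probabilities in $\dMA[v]$ and $\dMA_M[v]$ for $M\in\{2N,2N+1\}$. Maximising or minimising over schedulers and then quantifying over all $v$ gives the parametric statement, exactly as in the proof of Lemma~\ref{lemma:parametricMADiscretization}. The two models share the same graph and differ only in Markovian states. There, the true self-loop (``stay'') mass $e^{-\lambda(s)\delta}$ is replaced by the truncated series $T_M(\lambda(s)\delta)=\sum_{k=0}^{M}\frac{(-\lambda(s)\delta)^k}{k!}$ (reading Definition~\ref{def:approxDiscretization} with the argument $\lambda(s)\delta$). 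The remaining ``leave'' mass is redistributed according to the same $p(s,\cdot)$.

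The first step is a per-state multiplicative comparison. By the alternating series remainder bound, $|T_M(x)-e^{-x}|\le \frac{x^M}{M!}$ for $x=\lambda(s)\delta\ge 0$, and $x\le\lambda\delta$ with $\lambda=\sup_{v}\lambda[v]$. We then bound each transition probability of one model by $(1+\frac{(\lambda\delta)^M}{M!})$ times the corresponding probability in the other. For $M=2N+1$, $T_M\le e^{-x}$, so the stay mass shrinks and the leave mass $1-T_M$ grows. We write $1-T_M(x) \le (1-e^{-x}) + \frac{x^M}{M!}$ and aim for the multiplicative form $P'_{M}(s,s')\le (1+\frac{(\lambda\delta)^M}{M!})\,P'(s,s')$ for every $s'$. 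For $M=2N$ the roles are reversed: the leave mass shrinks, and we compare in the other direction.

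The second step lifts this to paths. Any path of length at most $t/\delta$ reaching $G$ has probability equal to a product of at most $t/\delta$ transition probabilities. Each factor is inflated by at most $(1+\frac{(\lambda\delta)^M}{M!})$, and only steps from Markovian states are affected. Summing over paths, the reachability probability under a fixed scheduler in $\dMAhigh[v]$ is therefore at most $(1+\frac{(\lambda\delta)^M}{M!})^{t/\delta}$ times that in $\dMA[v]$. Since probabilities are at most $1$, the additive error is at most $\epsilon^{M}_{t/\delta}=(1+\frac{(\lambda\delta)^M}{M!})^{t/\delta}-1$. The same argument with the roles swapped gives the lower-bound inequality for $\dMAlow$. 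Because the bound holds for every scheduler in the shared scheduler space (same action sets), it passes to the supremum and to the infimum, which yields the $Pmax$ and $Pmin$ versions.

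\textbf{Main obstacle.} The hardest part is the multiplicative per-transition bound, since the additive error $\frac{x^M}{M!}$ on the leave mass is not obviously small relative to $1-e^{-x}$. If a direct ratio bound fails, I would switch to a coupling or induction over the horizon. Let $V_k$ denote the $k$-step value vectors and $\eta=\frac{(\lambda\delta)^M}{M!}$. The one-step operators satisfy $\|P'_M-P'\|_\infty\le 2\eta$ rowwise, which gives $|V^M_{k}-V_k|\le |V^M_{k-1}-V_{k-1}|+\eta$ after a careful sign argument. This yields an additive bound of $\frac{t}{\delta}\eta$. That bound is at most $(1+\eta)^{t/\delta}-1$ by Bernoulli's inequality, so it suffices for the statement.
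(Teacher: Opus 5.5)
\emph{The multiplicative route.} Your headline argument does not go through as written. From the additive estimate $1-T_M(x)\le(1-e^{-x})+\frac{x^M}{M!}$, the ratio bound $1-T_M(x)\le(1+\eta)(1-e^{-x})$ with $\eta=\frac{(\lambda\delta)^M}{M!}$ would need $\frac{x^M}{M!}\le\eta\,(1-e^{-x})$. At a state and valuation with $x=\lambda\delta$ this reads $1-e^{-x}\ge 1$, which is false.

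The route can be repaired with the sharper Lagrange remainder
\[
0\le(-1)^{M+1}\bigl(e^{-x}-T_M(x)\bigr)\le\frac{x^{M+1}}{(M+1)!}=\frac{x^M}{M!}\cdot\frac{x}{M+1}.
\]
For, say, $\lambda\delta\le 1$, the factor $\frac{x}{M+1}$ lies below both $1-e^{-x}$ and $1-T_{2N}(x)$. This gives entrywise $P'_{2N+1}\le(1+\eta)P'$ and $P'\le(1+\eta)P'_{2N}$. Your path-product argument then reproduces $\epsilon^M_{t/\delta}=(1+\eta)^{t/\delta}-1$ exactly. You did not supply this estimate, so as written the first route has a hole.

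\emph{The fallback.} Your fallback is a correct proof and is essentially the paper's strategy: induction over the step horizon with an additive error of $\eta$ per Markovian step. To make it precise, note that for $s\in\MS$ the row difference is $d\,(\mathbf{1}_s-p(s,\cdot))$ with $d=T_M(x)-e^{-x}$ and $|d|\le\eta$. Hence
\[
V^M_k(s)-V_k(s)=\sum_{s'}P'_M(s,s')\bigl(V^M_{k-1}-V_{k-1}\bigr)(s')+d\Bigl(V_{k-1}(s)-\sum_{s'}p(s,s')V_{k-1}(s')\Bigr).
\]
The first term is at most $\|V^M_{k-1}-V_{k-1}\|_\infty$ because $P'_M(s,\cdot)$ is a distribution. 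This needs $0\le T_M(\lambda\delta)\le 1$, i.e.\ small $\delta$, which the lemma implicitly presupposes. The second term is at most $|d|$ because $V_{k-1}\in[0,1]$. This zero-row-sum observation is what gives $\eta$ rather than $2\eta$, and without it Bernoulli would not close the argument.

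It is cleaner to work directly with the max/min Bellman operators than with a fixed, possibly history-dependent scheduler. These operators are nonexpansive, and Markovian states have no choice. Probabilistic rows coincide in both models and consume no step, so under non-zenoness their discrepancy is bounded by that at Markovian states of the same level.

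\emph{Comparison with the paper.} The paper proves only the two one-sided inequalities. It uses the parity of $M$ to sign the errors: $T_{2N}\ge e^{-x}$ on the self-loop and $1-T_{2N}\ge 1-e^{-x}-\frac{x^{2N}}{(2N)!}$ on the leave mass. It then deliberately loosens its recurrence to $\epsilon_{k+1}=(1+\eta)\epsilon_k+\eta$, which solves directly to $(1+\eta)^k-1$. Your version is two-sided and parity-free, and it yields the sharper bound $\frac{t}{\delta}\eta$; the stated bound then follows from Bernoulli's inequality.
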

\begin{proof}
    Let $Sched(M)$ denote the set of all positional schedulers on model $M$. We first show that: $Pmax[\dMA_s,\diamond^{\leq k}G] - \epsilon^{2N}_{k} \leq Pmax[\dMAlow_s,\diamond^{\leq k}G]$ for all $k\in\mathbb{N}$ and $s \in S$. Here one should not forget that we should only count steps if taken in Markovian states.

	For case $k=0$, equality holds, because no steps are taken. Therefore, we continue our proof with induction, using base case $k=1$: 
       
\begin{itemize}
\item Suppose that $s\in \MS$, then:
        \begin{align*}
            &Pmax[\dMAlow_s,\diamond^{\leq 1}G] = (1-\sum_{j=0}^{2N}\frac{(-\lambda(s)\delta)^j}{j!})\sum_{s'\in S}p(s,s')Pmax[\dMAlow_{s'},\diamond^{\leq 0}G]\\
            =& (1-\sum_{j=0}^{2N}\frac{(-\lambda(s)\delta)^j}{j!})\sum_{s'\in S}p(s,s')Pmax[\dMA_{s'},\diamond^{\leq 0}G]\\
            \geq &(1-e^{-\lambda(s)\delta} - \frac{(\lambda(s)\delta)^{2N}}{(2N)!})\sum_{s'\in S}p(s,s')Pmax[\dMA_{s'},\diamond^{\leq 0}G]\\
            \geq &Pmax[\dMA_{s},\diamond^{\leq 1}G] -\frac{(\lambda(s)\delta)^{2N}}{(2N)!}\\
            \geq &Pmax[\dMA_{s},\diamond^{\leq 1}G] -\frac{(\lambda\delta)^{2N}}{(2N)!} = Pmax[\dMA_{s},\diamond^{\leq 1}G] -\epsilon_1^{2N}
        \end{align*}
\item Suppose that $s\in PS$, then:
        \begin{align*}
            &Pmax[\dMAlow_s,\diamond^{\leq 1}G] = \sup_{\pi\in Sched(\dMAlow)} \sum_{s'\in \MS}p_\pi(s,s')Pmax[\dMAlow_{s'},\diamond^{\leq 1}G]\\
            \geq&  \sup_{\pi\in Sched(\dMAlow)} \sum_{s'\in \MS}p_\pi(s,s')(Pmax[\dMA_{s'},\diamond^{\leq 1}G] - \epsilon_1^{2N})\\
            =&_{\textit{Non-zenoness assumption}}  \sup_{\pi\in Sched(\dMAlow)} \sum_{s'\in \MS}p_\pi(s,s')Pmax[\dMA_{s'},\diamond^{\leq 1}G] - \epsilon_1^{2N}\\
		=&_{\substack{\textit{available actions \&}\\ \textit{graph structure are equal}}}  \sup_{\pi\in Sched(\dMA)} \sum_{s'\in \MS}p_\pi(s,s')Pmax[\dMA_{s'},\diamond^{\leq 1}G] - \epsilon_1^{2N}\\
            = &Pmax[\dMA_s,\diamond^{\leq 1}G] - \epsilon_1^{2N}
        \end{align*}
\end{itemize}
	Now for the step case, we consider $k+1$ steps:
\begin{itemize} 
	\item Suppose that $s\in MS$:
            \begin{align*}
            &Pmax[\dMAlow_s,\diamond^{\leq k+1}G] \\
            = & (1-\sum_{j=0}^{2N}\frac{(-\lambda(s)\delta)^j}{j!})\sum_{s'\in S}p(s,s')Pmax[\dMAlow_{s'},\diamond^{\leq k}G] \\
		    &\hspace{3em} + \sum_{j=0}^{2N}\frac{(-\lambda(s)\delta)^j}{j!}Pmax[\dMAlow_s,\diamond^{\leq k}G] \\
            \geq &(1-e^{-\lambda(s)\delta}-\frac{(\lambda(s)\delta)^{2N}}{(2N)!})\sum_{s'\in S}p(s,s')Pmax[\dMAlow_{s'},\diamond^{\leq k}G] \\
		    &\hspace{3em} + e^{-\lambda(s)\delta}Pmax[\dMAlow_s,\diamond^{\leq k}G]\\
            \geq&_{\textit{IH}}(1-e^{-\lambda(s)\delta}-\frac{(\lambda(s)\delta)^{2N}}{(2N)!})\sum_{s'\in S}p(s,s')(Pmax[\dMA_{s'},\diamond^{\leq k}G]-\epsilon_{k}^{2N}) \\
		    &\hspace{3em} + e^{-\lambda(s)\delta}(Pmax[\dMA_s,\diamond^{\leq k}G] - \epsilon_k^{2N})\\
            =&(1-e^{-\lambda(s)\delta})\sum_{s'\in S}p(s,s')(Pmax[\dMA_{s'},\diamond^{\leq k}G]) + e^{-\lambda(s)\delta}(Pmax[\dMA_s,\diamond^{\leq k}G]) \\
		    & \hspace{3em}- \frac{(\lambda(s)\delta)^{2N}}{(2N)!}\sum_{s'\in S}p(s,s')Pmax[\dMA_{s'},\diamond^{\leq k}G]\\
		    & \hspace{3em}-(1-e^{-\lambda(s)\delta}-\frac{(\lambda(s)\delta)^{2N}}{(2N)!})\sum_{s'\in S}p(s,s')\epsilon_{k}^{2N} - e^{-\lambda(s)\delta} \epsilon_k^{2N}\\
            =&Pmax(\dMA_s,\diamond^{k+1}G) - (1-\frac{(\lambda(s)\delta)^{2N}}{(2N)!})\epsilon_k^{2N} - \frac{(\lambda(s)\delta)^{2N}}{(2N)!}\sum_{s'\in S}p(s,s')Pmax[\dMA_{s'},\diamond^{\leq k}G]\\
            \geq& Pmax(\dMA_s,\diamond^{k+1}G) - (1-\frac{(\lambda\delta)^{2N}}{(2N)!})\epsilon_k^{2N} - \frac{(\lambda\delta)^{2N}}{(2N)!}\\
            \geq & Pmax(\dMA_s,\diamond^{k+1}G) - (1+\frac{(\lambda\delta)^{2N}}{(2N)!})\epsilon_k^{2N} - \frac{(\lambda\delta)^{2N}}{(2N)!}
            \end{align*}
            
            \noindent Here, the last inequality could be left out to obtain slightly sharper bounds. However, keeping this inequality yields the same bounds for both $\dMAhigh$ and $\dMAlow$, so for sake of presentation, we keep them as is.
		We then have
            \begin{align*}
            & Pmax(\dMA_s,\diamond^{k+1}G) - (1+\frac{(\lambda\delta)^{2N}}{(2N)!})\epsilon_k^{2N} - \frac{(\lambda\delta)^{2N}}{(2N)!}\\
		    =& Pmax(\dMA_s,\diamond^{k+1}G) - (1+\frac{(\lambda\delta)^{2N}}{(2N)!})((1+\frac{(\lambda\delta)^{2N}}{(2N)!})^k-1) - \frac{(\lambda\delta)^{2N}}{(2N)!}\\
		    =& Pmax(\dMA_s,\diamond^{k+1}G) - ((1+\frac{(\lambda\delta)^{2N}}{(2N)!})^{k+1}-(1+\frac{(\lambda\delta)^{2N}}{(2N)!})) - \frac{(\lambda\delta)^{2N}}{(2N)!}\\
		    =& Pmax(\dMA_s,\diamond^{k+1}G) - ((1+\frac{(\lambda\delta)^{2N}}{(2N)!})^{k+1}-1)+\frac{(\lambda\delta)^{2N}}{(2N)!} - \frac{(\lambda\delta)^{2N}}{(2N)!}\\
		    =& Pmax(\dMA_s,\diamond^{k+1}G) - \epsilon_k^{2N}
            \end{align*}
    \item If $s\in PS$, the reasoning is analogous to the base case.
\end{itemize}
            This concludes this part of the proof, a similar methodology can be applied for $\dMAhigh$, where instead we get a positive sign for $\epsilon_k^{2N+1}$ to obtain an upper bound. Finally, the following recurrence relation has been found: \begin{align*}
                \epsilon_{n+1}^{2N} &= (1-\frac{(\lambda\delta)^{2N}}{(2N)!})\epsilon_n^{2N} + \frac{(\lambda\delta)^{2N}}{(2N)!}\\
                \epsilon_1^{2N} &= \frac{(\lambda\delta)^{2N}}{(2N!)}
            \end{align*}
            Solving this yields $\epsilon_n^{2N} = \frac{(\lambda\delta)^{2N}}{(2N!)} \sum_{j=0}^{n-1}(1+\frac{(\lambda\delta)^{2N}}{(2N)!})^j$. For $n=t/\delta$ we obtain:
            \begin{align*}
                \epsilon_{t/\delta}^{2N} =& \frac{(\lambda\delta)^{2N}}{(2N!)} \sum_{j=0}^{t/\delta-1}(1+\frac{(\lambda\delta)^{2N}}{(2N)!})^j\\
                =& ((1+\frac{(\lambda\delta)^{2N}}{(2N)!})^{t/\delta-1} + \frac{(2N)!}{(\lambda\delta)^{2N}}((1+\frac{(\lambda\delta)^{2N}}{(2N)!})^{t/\delta-1}-1))\frac{(\lambda\delta)^{2N}}{(2N)!}\\
                =&((1+\frac{(\lambda\delta)^{2N}}{(2N)!})^{t/\delta-1}(1+\frac{(2N)!}{(\lambda\delta)^{2N}})-\frac{(2N)!}{(\lambda\delta)^{2N}})\frac{(\lambda\delta)^{2N}}{(2N)!}\\
                =& (1+\frac{(\lambda\delta)^{2N}}{(2N)!})^{t/\delta}-1
            \end{align*}
            Therefore, we obtained the error bounds as claimed. The case for $Pmin(\cdot)$ is analogous.
        \qed
\end{proof}
Theorem~\ref{thm:approxDiscretizationError} then follows as a direct corollary of Lemmas~\ref{lemma:pMAapproxApproximations} and \ref{lemma:approxDiscretizationBounds}.

\section{Details on correct use of Parameter Lifting}\label{appendix:extremaPL}

As covered in Section~\ref{ss:ParameterLifting}, we need additional global assumptions to ensure that the transformations throughout Section~\ref{section:results} yield parametric formulas such that the extrema are attained at one of the vertices of the region. To reiterate, over $\PS$ transitions, we assumed the following:

``I) To ensure that the elimination of $\PS$ chains yields monotonic parametric functions, we assume that no two states of such a chain have a parametric function over the same variable and II) to cover the collapse of $\PS$ in $\MS$, we assume that the sets of parameters over rate- and probabilistic transition functions are disjoint.''

\noindent Therefore, we have to show two things: I) by adding additional auxiliary variables, we increase the number of dimensions. Due to this, it could be the case that one partitions the region up to arbitrary precision, whilst the relevant hyperplane remains undecided. However, we show that this is not the case and one can still decide this hyperplane up to arbitrary precision. II) If we can show that the probabilistic transitions of the approximate discretization are indeed monotone, we have shown that extrema are attained at the vertices of the region, due to the structure of the solution function of the relaxed MDP as given in~\cite{DBLP:journals/fmsd/JungesAHJKQV24}. In Lemma~\ref{lemma:monotonicity}, we show that this property is maintained throughout discretization. 

First of all, we start with showing I). In particular, we show that the solution function of the pMA is continuous over each parameter, by using a known result on continuity of graph-preserving parametric Markov chains~\cite{DBLP:journals/fmsd/JungesAHJKQV24}. Thus, if one can decide whether $\MA[v]$ is accepting or not by considering an interval on the hyperplane, one can find a small enough subregion on the higher dimensional model, containing $v$, such that the same conclusion can be drawn.

\begin{theorem}\label{thm:auxiliaryVars}
    Let $\MA$ be a pMA over parameters $\mathcal{X}=\{x_1, ..., x_n\}$, region $\Region$, $t\in \mathbb{R}_{>0}$ and $G\subseteq S$. Consider the property $Pmax[\MA[v], \diamond^{\leq t}G]$ and an oracle which, for every $D\subseteq \Region$, gives lower and upper bounds for this probability with non-zero error $\epsilon$, where $\epsilon\rightarrow 0$ as $|D|\rightarrow 0$.
    
    Now, consider pMA $\MA'$ over parameters $\mathcal{X}' = \{x_1,...,x_n,x_i'\}$, obtained by replacing some occurrences of parameter $x_i$ by auxiliary parameter $x'_i$, and region $\Region' = \Region\times \Region|_{x_i}$. If one can decide a valuation $(v_1,...,v_n)\in \Region$ with this oracle, there exists an $\eta$-ball in $\Region'$ such that the valuation can be decided by our approach on a hyperrectangle $D'= \bigotimes_{i\in [|\mathcal{X}'|]}[a_i,b_i], a_i<b_i$ contained in this $\eta$-ball and $(v_1, ..., v_n, v_i)\in D'$. 
\end{theorem}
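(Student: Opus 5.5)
The idea is to view the original pMA $\MA$ as the restriction of $\MA'$ to the diagonal hyperplane $H=\{w\in\Region' : w_{x_i'}=w_{x_i}\}$. By construction $\MA'[(v_1,\dots,v_n,v_i)]=\MA[v]$, so the two models agree on $H$. The proof then has two ingredients. The first is continuity of the solution function $w\mapsto Pmax[\MA'[w],\diamond^{\leq t}G]$ on $\Region'$. The second is that the bounds produced by our approach (approximate discretization plus PL) on a small hyperrectangle around $(v,v_i)$ converge to this value as the hyperrectangle and $\delta$ shrink.

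\textbf{Step 1 (decidability margin).} Since the oracle decides $v$, let $p^\ast=Pmax[\MA[v],\diamond^{\leq t}G]$. For $\leq\theta$ say $p^\ast<\theta$, or $p^\ast>\theta$ in the violating case; in either case set $\gamma=|\theta-p^\ast|>0$. The bounds from the oracle have nonzero error, so I will assume that deciding $v$ means its bounds are strictly separated from $\theta$. This yields the strict margin $\gamma$.

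\textbf{Step 2 (continuity).} I will show that $w\mapsto Pmax[\MA'[w],\diamond^{\leq t}G]$ is continuous on $\Region'$. By Lemma~\ref{lemma:parametricMADiscretization}, this function is the uniform limit (as $\delta\to0$) of $w\mapsto Pmax[\dpMA[w],\diamond^{\leq t/\delta}G]$. The error $\epsilon_\delta$ is uniform in $w$ because $\lambda[w]$ is bounded on the compact $\Region'$, and $\epsilon_\delta\to0$. Each discretized function is a finite-horizon max over finitely many deterministic schedulers of step-bounded reachability in a graph-preserving pMDP. The entries are continuous in $w$ (compositions of polynomials and $e^{-x}$), so the discretized function is continuous; this is the continuity result for graph-preserving models~\cite{DBLP:journals/fmsd/JungesAHJKQV24}. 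A uniform limit of continuous functions is continuous. Hence there is an $\eta>0$ such that $|Pmax[\MA'[w]]-p^\ast|<\gamma/4$ on the $\eta$-ball $B$ around $(v,v_i)$.

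\textbf{Step 3 (our bounds converge).} Choose $\delta$ small enough that $\epsilon_\delta$ and the Taylor error terms of Theorem~\ref{thm:approxDiscretizationError} are each below $\gamma/8$; both are uniform in $w$ since $\lambda=\sup_{\Region'}\lambda[w]$. Then pick a hyperrectangle $D'\subseteq B$ containing $(v,v_i)$ with positive side lengths, small enough that the PL gap is below $\gamma/8$. The PL gap is $Pr_{p,\max}-Pr_{p,\min}$ over $D'$ on $\dMAhigh$ and $\dMAlow$. Its smallness follows because the relaxed SG's transition intervals have widths bounded by the oscillation of continuous (multi-affine after the assumptions of Section~\ref{ss:ParameterLifting}) functions on $D'$. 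A finite-horizon ($t/\delta$ steps) value is Lipschitz in the transition probabilities, with a constant depending on $t/\delta$ but fixed once $\delta$ is fixed. Combining, $upper_\clubsuit$ and $lower_\clubsuit$ on $D'$ lie within $\gamma/2$ of $p^\ast$, so both fall on the same side of $\theta$ and $D'$ is decided consistently with $v$.

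\textbf{Main obstacle.} The delicate point is the order of quantifiers in Step 3. The PL error must be made small \emph{after} fixing $\delta$, because the Lipschitz constant grows with the horizon $t/\delta$. The discretization error, by contrast, must not depend on $D'$. I plan to handle this by fixing $\delta$ first via the uniform bounds, and only then shrinking $D'$.
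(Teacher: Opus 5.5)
Your proposal is correct. Its Step~2 is the paper's own argument. The paper proves continuity of $w\mapsto Pmax[\MA'[w],\diamond^{\leq t}G]$ by an $\epsilon/3$ argument: discretization error at most $\epsilon/3$ on each side, plus continuity of the discretized value as a maximum over finitely many deterministic schedulers of graph-preserving chains. The only difference there is that the paper uses the oracle's error $\epsilon$ as slack rather than your $\gamma=|\theta-p^\ast|$.

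The difference is what comes after. The paper concludes straight from continuity that some hyperrectangle $D'$ in the $\eta$-ball exists on which our approach ``yields less strict bounds than obtained by the oracle''. It never controls the discretization, Taylor and parameter-lifting errors that actually enter $upper_\clubsuit$ and $lower_\clubsuit$. Your Step~3 fills exactly this hole, with the right order of quantifiers. You first fix $\delta$ using the region-uniform bounds of Lemma~\ref{lemma:parametricMADiscretization} and Theorem~\ref{thm:approxDiscretizationError}. Only then do you shrink $D'$ until the PL gap is below $\gamma/8$; that gap is at most a horizon-dependent multiple of the oscillation of the transition functions on $D'$.

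Your estimates $upper_\clubsuit< p^\ast+3\gamma/8$ and $lower_\clubsuit> p^\ast-3\gamma/8$ use only two things: soundness of PL and of the approximations at the single point $(v_1,\dots,v_n,v_i)$, and continuity of the \emph{transition} functions. So continuity of the solution function, which is the paper's main ingredient, is not needed for the conclusion. It only serves to exhibit the $\eta$-ball named in the statement.

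Two small inaccuracies. First, the transition functions of $\dMAlow,\dMAhigh$ are in general not multi-affine, since powers of $\lambda(s)$ appear. This is harmless, because your Lipschitz argument needs only continuity. Second, you do not need to assume the oracle's bounds are strictly separated from $\theta$: $p^\ast\neq\theta$ already follows from the non-zero error hypothesis. When the oracle reports satisfaction, $p^\ast\le ub-\epsilon\le\theta-\epsilon$; when it reports violation, $p^\ast\ge lb>\theta$.
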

\begin{proof}
    Let $D\subseteq \Region$ with $v\in D$ and let $lb,ub$ be the bounds obtained for checking region $D$ with this oracle. Suppose that one can decide satisfaction or violation of the property by the model induced by valuation $v$ from these bounds. Then, per our non-zero assumption of the oracle, $\epsilon \leq \min(Pmax[\MA[v], \diamond^{\leq t}G] - lb, ub -Pmax[\MA[v], \diamond^{\leq t}G])$. Now we will show that there exists an $\eta>0$ such that for any $w$ in the $\eta$-ball around $v'=(v_1, ..., v_n, v_i)$, we have that $|Pmax[\MA'[v'], \diamond^{\leq t}G] - Pmax[\MA'[w], \diamond^{\leq t}G]| < \epsilon$. That is, the solution function is continuous with respect to the parameters of the pMA, from which we can conclude that there exists some hyperrectangular region with which to decide acceptance/violation of $v'$. Thus, to show continuity:
    \begin{align*}
        &|Pmax[\MA'[v'], \diamond^{\leq t}G] - Pmax[\MA'[w], \diamond^{\leq t}G]|\\
        \leq&_{\textit{for small enough $\delta$}} |Pmax[\dMA[v'], \diamond^{\leq t}G] - Pmax[\dMA[w], \diamond^{\leq t}G]| + \frac{2}{3}\epsilon\\
    \end{align*}
    It is known that graph preserving pMC give continuous solution functions~\cite{DBLP:journals/fmsd/JungesAHJKQV24}. Thus, as we can rewrite $Pmax[\dMA[v'], \diamond^{\leq t}G]$ as the maximum of all reachability probabilities of a DTMC induced by a non-randomized scheduler, of which there are finitely many, and the maximum of finitely many continuous functions is continuous as well, we get that $Pmax[\dMA, \diamond^{\leq t}G]$ admits a continuous solution function with respect to its parameters. Thus there exists an $\eta>0$ such that:
    \begin{align*}
        |v-w|<\eta \rightarrow |Pmax[\dMA[v], \diamond^{\leq t}G] - Pmax[\dMA[w], \diamond^{\leq t}G]| < \frac{\epsilon}{3}
    \end{align*}
 Thus, we conclude that, for any valuation $w$ in the $\eta$-ball centered around $v'$, we obtain 
    \begin{align*}
        |Pmax[\MA'[v'], \diamond^{\leq t}G] - Pmax[\MA'[w], \diamond^{\leq t}G]| < \epsilon
    \end{align*}
    In particular, this means that there exists a hyperrectangle $D'$, contained in this ball, for which our approach yields less strict bounds than obtained by the oracle. Therefore, we can decide $v'$ using our approach and the proof is concluded as $Pmax[\MA'[v'], \diamond^{\leq t}G] = Pmax[\MA[v], \diamond^{\leq t}G]$ per construction of $\MA'$.\qed
\end{proof}

Note that the theorem above does not require our approach to work on the model without auxiliary variables, but our approach can decide at least as much as any oracle yielding results with non-zero error bounds. As many variables may be substituted, iterative application of Theorem~\ref{thm:auxiliaryVars} yields the same claim of our result.

Secondly, we show monotonicity of the discretized model. Here, it is important to note that the collapse of states does not impact the result, as they can safely assumed to be over different parameters.

\begin{lemma}\label{lemma:monotonicity}
    Let $\MA$ be a pMA, $R\subseteq \mathbb{R}^{|\mathcal{X}|}$ a hyperrectangular region, $s\in \MS$ and $\delta >0$ a discretization step according to Theorem~\ref{thm:GuckDiscretization}. Suppose that $R(s,s')$ is multi-affine for all $s'\in S$, then the extrema of the following function is attained at one of the vertices of $R$:
    \begin{align*}
        &f_1(x) = (1-e^{-\lambda(s)(x)\delta}) p(s,s')(x)
    \end{align*}
\end{lemma}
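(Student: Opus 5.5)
The plan is to reduce the claim to a one-dimensional monotonicity statement and then use the standard vertex argument for hyperrectangles. Write $f_1(x) = R(s,s')(x)\, g(\lambda(s)(x))$, where $p(s,s') = R(s,s')/\lambda(s)$ and $g(\lambda) = (1-e^{-\lambda\delta})/\lambda$. Suppose we show that for every coordinate $x_i$, with all other coordinates fixed, $f_1$ is monotone in $x_i$ on the corresponding edge of $R$. Then the extrema are attained at vertices. The argument is the usual one: take any point of $R$ and move it coordinate by coordinate to an endpoint of its interval without decreasing (respectively increasing) $f_1$. After $|\mathcal{X}|$ moves we reach a vertex.

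Fix all coordinates except $x_i$. By multi-affinity, $R(s,s') = a + b x_i$ and $\lambda(s) = c + d x_i$, where $\lambda(s)$ is the sum of the multi-affine outgoing rates. If $d=0$, then $g$ is constant along the edge and $f_1$ is affine in $x_i$, hence monotone. If $d\neq 0$, reparametrise by $\lambda$ itself, which is affine and injective in $x_i$. Then $R(s,s') = \alpha + \beta\lambda$ with $\beta = b/d$ and $\alpha = a - bc/d$, and
\begin{align*}
  f_1 = \alpha\, g(\lambda) + \beta\,(1-e^{-\lambda\delta}).
\end{align*}
Multiply $\frac{d f_1}{d\lambda}$ by the positive factor $\lambda^2 e^{\lambda\delta}$ and substitute $u=\lambda\delta$. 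Monotonicity then reduces to a constant sign, on the range $u\in(0,\delta\lambda_{\max}]$, of
\begin{align*}
  k(u) = \alpha\,(1+u-e^{u}) + \tfrac{\beta}{\delta}\, u^2 ,
\end{align*}
where $\lambda_{\max} = \max_{v\in V_R}\lambda[v](s)$.

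Next I do a case analysis on the signs of $\alpha$ and $\beta$, using $1+u-e^u\le 0$. If $\alpha\le 0\le\beta$, both terms are nonnegative and $f_1$ is nondecreasing. If $\alpha\ge 0\ge\beta$, both terms are nonpositive and $f_1$ is nonincreasing. The mixed cases with $\alpha,\beta$ of the same sign are the main obstacle. For example, $\alpha,\beta>0$ arises when $x_i$ occurs both in $R(s,s')$ and in another outgoing rate of $s$. In that case $k$ genuinely changes sign for large $u$, so an interior maximum is possible for large $\delta$.

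This is where I would use that $\delta$ is a discretization step chosen as in Theorem~\ref{thm:GuckDiscretization}, i.e.\ small relative to $\lambda_{\max}$. By Taylor's theorem, $e^u-1-u\le \tfrac{u^2}{2}e^{u}$, so for $\alpha,\beta>0$,
\begin{align*}
  k(u)\ \ge\ u^2\Big(\tfrac{\beta}{\delta} - \tfrac{\alpha}{2}e^{\delta\lambda_{\max}}\Big),
\end{align*}
which is positive for all admissible $u$ once $\delta \le 2\beta e^{-\delta\lambda_{\max}}/\alpha$. This condition holds for all sufficiently small $\delta$, since the left side tends to $0$ and the right side tends to $2\beta/\alpha$. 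The case $\alpha,\beta<0$ is symmetric and gives $k<0$. There are finitely many edges and coefficient pairs $(\alpha,\beta)$, and these coefficients depend continuously on the fixed coordinates over the compact region. So one uniform threshold on $\delta$ makes $f_1$ monotone along every edge direction, which I would fold into the choice of $\delta$ in the statement.

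The remaining care is to make this uniformity argument precise: $\alpha/\beta$ must stay bounded on the region, and degenerate subcases $\beta\to 0$ must be handled. When $\beta=0$ exactly, we are in the case $\alpha\ge 0\ge\beta$ (or its mirror), so the sign is already fixed.
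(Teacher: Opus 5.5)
\textbf{The uniformity step fails.} Your per-line computation is correct. With $h(u)=(e^u-1-u)/u^2$, which increases from $1/2$, you have $k(u)=u^2\bigl(\beta/\delta-\alpha h(u)\bigr)$, and your sign cases are right. The case $\alpha,\beta<0$ is actually vacuous, since $R(s,s')>0$ and $\lambda(s)>0$. The problem is the uniformity step, and it cannot be patched as you suggest because the claim is false. There are not finitely many pairs $(\alpha,\beta)$: there is one per axis-parallel line through the box, a continuum. Compactness bounds $\alpha$ and $\beta$, but it does not keep $\beta/\alpha$ away from $0$ on lines where both are positive. Your threshold is $\delta\lesssim 2\beta/\alpha$. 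On any line with $\delta h(u_{\min})<\beta/\alpha<\delta h(u_{\max})$, the function $k$ changes sign and $f_1$ has an interior maximum. Handling $\beta=0$ exactly does nothing about $\beta/\alpha\to0^+$.

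\textbf{Counterexample.} Take $R(s,s_1)=1+xy$ and $R(s,s_2)=x$ on $[1,3]\times[0,1]$. This is multi-affine and graph-preserving. Along $x$-lines, $\alpha=1/(1+y)$ and $\beta=y/(1+y)$, so $\beta/\alpha=y$. Hence for every sufficiently small $\delta$, all lines with $y$ in a small interval just above $\delta/2$ are non-monotone. For example, with $\delta=0.1$ and $y=0.0555$ one gets $f_1\approx 0.09541,\ 0.09548,\ 0.09540$ at $x=1,\ 1.94,\ 3$. Along $y$-lines, $\alpha=-x<0<\beta=1$, so $f_1$ increases in $y$. On the thin box $[1,3]\times[0.0555-\eta,\,0.0555]$ the maximum therefore lies on the top edge near $x\approx 1.94$, not at a vertex. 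So for a fixed $\delta$ the conclusion itself fails on some boxes, and the admissible $\delta$ must depend on the box. On boxes containing lines with $\beta/\alpha\to0^+$, such as $[1,3]\times[0,1]$ itself, coordinatewise monotonicity fails for every small $\delta$. Any proof there must be global rather than line-by-line.

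\textbf{What closes it, and comparison with the paper.} Your argument is complete, for every $\delta>0$ and with no smallness at all, under an extra hypothesis such as: no parameter occurs in two different outgoing rates of $s$. Then on every line either $f_1$ is constant, or $\beta=1$ and $\alpha\le0$, or $\beta=0$ and $\alpha>0$. This can be enforced with auxiliary parameters, in the spirit of the paper's other global assumptions. Alternatively, assume a positive lower bound on $\beta/\alpha$ over lines where both are positive; your Taylor estimate then gives a uniform threshold. Reading smallness of $\delta$ into the hypothesis, as you do, is unavoidable and matches the paper.

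The paper differentiates directly in $x$ and splits on the sign of $\phi=\partial_x p(s,s')$, which for increasing $\lambda$ is the sign of $-\alpha$. In the case $\phi\le0$ it asserts $f_1'\ge0$ for small $\delta$. That holds only when $\beta>0$: for $R(s,s')\equiv1$ and $\lambda(s)=1+x$, the function $f_1=(1-e^{-\delta(1+x)})/(1+x)$ is strictly decreasing. Your $(\alpha,\beta)$ split is therefore more accurate than the paper's. However, the paper's ``for small enough $\delta$'' is never made uniform either, so it provides nothing that fills the gap you flagged.
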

\begin{proof}
	As $R(s,s')(x)$ is multi-affine for all $s' \in S$, so is $\lambda(s) = \sum_{s'\in S}R(s,s')$. Moreover, as mentioned in Section~\ref{section:prelims}, multi-affine functions are monotone in each variable. It suffices to show that $f_1(x)$ is monotone. Therefore, consider $\frac{d}{dx}f_1$:
    \begin{align*}
        \frac{d}{dx}f_1(x) &= 
	    \frac{d}{dx} \left((1-e^{-\lambda(s)(x)\delta}) \frac{R(s,s')(x)}{\lambda(s)(x)}\right)\\
		&= \frac{\lambda(s)(x)\frac{d}{dx}R(s,s')(x) - R(s,s')(x)\frac{d}{dx}\lambda(s)(x)}{(\lambda(s)(x))^2}(1-e^{-\delta\lambda(s)(x)}) \\
		&\hspace{3em}+ \delta\frac{d}{dx}[\lambda(s)(x)]e^{-\delta\lambda(s)(x)}(\frac{R(s,s')(x)}{\lambda(s)(x)})
    \end{align*}
    Now we denote $\phi(x) = \frac{\lambda(s)(x)\frac{d}{dx}R(s,s')(x) - R(s,s')(x)\frac{d}{dx}\lambda(s)(x)}{(\lambda(s)(x))^2}$. From now on, assume that $\lambda$ is monotone increasing in $x$, where the decreasing case is analogous.

    In case that $\phi(x)\geq 0 $, we get:
    \begin{align*}
        \frac{d}{dx}f_1(x) \geq \phi(x)(1-e^{-\delta\lambda(s)(x)}) \geq \phi(x)(1-e^{-\delta\inf_{z\in R}\lambda(s)(z)}) \geq 0
    \end{align*}
    Inequality $\inf_{z\in R}\lambda(s)(z) >0$ holds due to the assumption that region valuations are well-defined, as made in Section~\ref{section:prelims}.

    Similarly, for $\phi \leq 0$:
    \begin{align*}
	    \frac{d}{dx}f_1(x) &= \phi(x)(1-e^{-\delta\lambda(s)(x)}) + \delta\frac{d}{dx}[\lambda(s) (x)]e^{-\delta\lambda(s)(x)}(\frac{R(s,s')(x)}{\lambda(s)(x)}) \\
	    &= e^{-\delta\lambda(s)(x)} \left(\phi(x)e^{\delta\lambda(s)(x)} - \phi(x) + \delta\frac{d}{dx}[\lambda(s) (x)](\frac{R(s,s')(x)}{\lambda(s)(x)})\right) \geq 0
    \end{align*}
	This holds, as for small enough $\delta$, we have that:
    \begin{align*}
	    \delta\frac{d}{dx}[\lambda(s)(x)](\frac{R(s,s')(x)}{\lambda(s)(x)})\geq \phi(x)(1-e^{\delta\lambda(s)(x)})
    \end{align*}
	Thereby concluding the proof. \qed
\end{proof}

In this Lemma~\ref{lemma:monotonicity}, we do not cover the case of discretizing a self-loop, as each MA can be transformed to an equivalent MA without self-loops in Markovian states. Therefore, the self-loop probability of a state $s$ in the discretized model is monotone increasing/decreasing if $\lambda(s)$ is respectively decreasing/increasing.

\fi

\end{document}